\documentclass[letterpaper]{article}

\ifdefined\aaaianonymous
    \usepackage[submission]{aaai2026}
\else
    \usepackage{aaai2026}
\fi

\usepackage{times}
\usepackage{helvet}
\usepackage{courier}
\usepackage[hyphens]{url}
\usepackage{graphicx}
\usepackage{natbib}
\usepackage{amsmath}
\usepackage{amssymb}
\usepackage{amsthm}

\usepackage{mathtools}
\usepackage{booktabs}
\usepackage{array}
\usepackage{placeins}
\usepackage{float}
\usepackage{multirow}
\usepackage{enumitem}
\newcommand{\R}{\mathbb{R}}
\newcommand{\Prb}{\mathbb{P}}

\newcommand{\norm}[1]{\lVert #1 \rVert}

\newcommand{\ctx}{m_{\mathrm{ctx}}}
\newcommand{\tool}{a_{\mathrm{tool}}}

\newtheorem{definition}{Definition}
\newtheorem{theorem}{Theorem}

\newtheorem{corollary}{Corollary}
\newtheorem{proposition}{Proposition}

\title{Brain-Prompt Injection: A Route-Safety Audit for BCI--LLM Agents}

\author{Jianwei Tai}
\affiliations{School of Internet, Anhui University\\
24012@ahu.edu.cn}

\begin{document}
\maketitle

\begin{abstract}
BCI-to-agent pipelines turn decoded neural activity into an authorization channel for tool-use agents, exposing a new attack surface we call \emph{brain-prompt injection}: signal-side perturbations, context-only injections, and adaptive dual-decoder attacks can all change the routed action while EEG-side or text-side monitors remain blind. Route safety in this stack depends on what the audit log can observe, not on decoder accuracy or agreement alone. We define a Route-Safety Audit Contract: a minimal log schema, denominator hierarchy, and endpoint specification, and prove an audit-schema separation theorem together with a C3 attacked-dependence decomposition; clean agreement and marginal robustness do not identify the joint term that controls C3 routing. As a calibration layer on top of the contract, we apply split-conformal calibration to a non-oracle EEG confirmation channel and report the resulting false-accept frontier under an explicit threat-archetype matrix. We instantiate the contract on EEGMMI native left/right command-control over 5{,}400 events, harmless tool stubs, and seed/case denominators. Provenance blocks C2 routes ($0.000$); agreement-plus-provenance routes C3 flips ($1.000$); confirmation-plus-provenance routes them ($0.000$). The conformal frontier reaches FAR $0.000$ at clean utility $0.150$ for $\alpha=.005$ and FAR $0.119$ at clean utility $0.452$ for $\alpha=.10$ under acquisition isolation; an attacker-controllable confirmation channel breaks the bound to $\approx\!1$. Subject-cluster bootstrap confirms these intervals on $60$ subjects; cross-architecture (TinyEEGNet, EEGNetV4) and capacity-sweep results show within-regime saturation. Mediation and confirmation reduce risk; they are not intent certificates.
\end{abstract}

\section{Introduction}
\label{sec:intro}

Brain--computer interfaces (BCIs) increasingly act as authorization channels rather than only display channels. Non-invasive and invasive systems decode neural activity into commands, characters, or speech with steadily lower error rates \cite{willett2021high,metzger2023high,levy2025brain2qwerty}, and large EEG foundation models such as BIOT and LaBraM are pushing toward universal cross-dataset decoders that can be plugged into downstream pipelines \cite{yang2023biot,jiang2024labram}. In parallel, large language models have moved from text generation to tool-using \emph{agents} that select APIs, invoke side effects, and act on behalf of the user \cite{schick2023toolformer,qin2023toolllm}. The natural composition of these two lines is a BCI--LLM agent: a decoded neural command becomes an instruction-like input that a tool-using LLM converts into routes such as ``move the cursor'', ``send a message'', or ``confirm a transfer.''

This composition opens an attack surface that neither line alone treats. We call the decoded neural instruction stream a \emph{brain-prompt channel}, and we call the corresponding family of attacks \emph{brain-prompt injection}. Brain-prompt injection is exposed to three failure modes that route-safety claims must address jointly. First, signal-side perturbations of the EEG window, in the spirit of EEG adversarial examples \cite{zhang2019vulnerability}, can change the decoded command (C1). Second, context-only injections, analogous to indirect prompt injection on LLM agents \cite{greshake2023indirect,yi2023bipia,zhan2024injecagent,debenedetti2024agentdojo,zhang2024asb}, can alter the agent route while the EEG side remains clean and any EEG-only monitor sees no anomaly (C2). Third, adaptive shared-input perturbations can drive a primary decoder $f$ and an optional verification decoder $f_2$ jointly, so dual-channel agreement is satisfied by an attacker-controlled rather than user-intended target (C3). Existing benchmarks for indirect prompt injection \cite{zhan2024injecagent,debenedetti2024agentdojo,zhang2024asb} cover the C2-style failure but do not log a decoded neural source; existing EEG attacks \cite{zhang2019vulnerability} cover the C1/C3 signal side but do not log context provenance or downstream routing. The authorization claim a BCI--LLM agent must defend lives at the intersection.

A natural first defense is to require two heterogeneous decoders to agree before the agent acts. We show that, under a shared raw-input perturbation budget, dual-decoder agreement does not certify intent: an attacker can satisfy the agreement predicate by jointly driving both decoders' margins (Proposition~\ref{lem:observability-feasibility}), and the resulting C3 route risk factors through a primary-target probability $p_1$, an attacked conditional agreement term $\Prb(B\mid A)$, and an execution-policy term $\alpha$, with an attacked dependence lift $\Delta_{\mathcal{D}}=\Prb(A\cap B)-p_1p_2$ that clean agreement statistics never identify (Theorem~\ref{thm:c3-risk-decomposition}). The decomposition makes the corresponding audit-schema separation explicit: the per-case log must record context provenance, the attacked secondary decision, the execution policy, and the route/confirmation outcome, otherwise paired audit worlds with identical observed logs can have opposite C2 or C3 truth values.

We study the audit boundary rather than a new universal defense or a physical attack. The empirical contribution is a calibrated falsification protocol on EEGMMI native left/right command-control: 5{,}400 events, ten seeds, harmless cursor and tool stubs, seed/case denominators, and strict held-out calibration. On top of the audit contract we apply a textbook split-conformal calibration layer \cite{vovk2005algorithmic,romano2019conformalized} to a non-oracle confirmation channel and report the resulting false-accept frontier across an explicit threat-archetype matrix; a folklore necessity argument shows that confirmation-channel non-controllability cannot be removed without losing the bound. C4 (extraction-style query attacks) is logged as metadata only; the load-bearing theory and experiments are C2/C3-focused.

\paragraph{Contributions.}
\begin{itemize}
    \item \textbf{Route-Safety Audit Contract.} We prove an audit-schema separation theorem and a C3 attacked-dependence decomposition for BCI tool-use authorization: the schema names the variables individually necessary and jointly sufficient to certify a C2 or C3 route claim, and the decomposition identifies the attacked joint term that clean agreement and marginal robustness cannot identify.
    \item \textbf{Conformal calibration layer with falsifiable assumptions.} On top of the contract we apply split-conformal calibration to the non-oracle confirmation channel and characterize its operational assumptions through a threat-archetype matrix (A, B, B$_{\mathrm{xr}}$, C, C$'$); a folklore necessity argument shows that confirmation-channel non-controllability cannot be removed without losing the bound.
    \item \textbf{Empirical instantiation on native EEGMMI command-control.} Over $5{,}400$ left/right fist execution and imagery events with harmless tool stubs and strict held-out calibration, we report the audit on Exp8 native command-control, the Exp9 confirmation frontier, the cross-architecture and capacity sweeps, the threat-archetype matrix, preprocessing/temporal stress, and matched stronger-defense baselines, with subject-cluster bootstrap confirming the trial-level intervals.
\end{itemize}

\section{Related Work}
\label{sec:related}

\paragraph{LLM-assisted BCI systems.}
BCI systems decode neural activity into communication and control commands, from general-purpose platforms such as BCI2000 to high-performance neural text and speech interfaces \cite{schalk2004bci2000,willett2021high,metzger2023high}. Non-invasive interfaces such as MEG-based Brain2Qwerty have begun closing the gap between EEG-grade decoding and deployable typing rates \cite{levy2025brain2qwerty}, while biosignal foundation models such as BIOT and LaBraM further motivate cross-dataset, cross-task EEG decoders that downstream language or control systems can consume \cite{yang2023biot,jiang2024labram}. We study the security boundary created when decoded neural intent is treated as an authorization signal for a tool-use agent.

\paragraph{Prompt injection, jailbreaking, and tool-use agents.}
Tool-using language models expose APIs and external context to model-mediated decisions \cite{schick2023toolformer,qin2023toolllm}. Indirect prompt injection shows that context supplied outside the user's direct instruction can compromise LLM-integrated applications \cite{greshake2023indirect}; jailbreak and transferable prompt attacks show that model-side safety predicates can fail under adaptive inputs \cite{wei2023jailbroken,zou2023universal}. A growing line of benchmarks now operationalizes these attacks on tool-integrated agents: BIPIA characterizes indirect-injection attack success across LLM backbones \cite{yi2023bipia}; InjecAgent reports that 30 LLM agents are vulnerable to embedded malicious instructions across 17 tool families \cite{zhan2024injecagent}; AgentDojo defines a dynamic environment for prompt-injection attacks and defenses on tool-using agents \cite{debenedetti2024agentdojo}; and Agent Security Bench formalizes ten attack and defense classes across 13 LLM backbones \cite{zhang2024asb}. Existing work largely studies text, retrieval, web, or tool contexts. It does not model decoded neural commands as a second authorization channel whose EEG-side observables may disagree with route-level context observables.

\paragraph{Adversarial robustness and BCI security.}
Adversarial-example work established that robustness claims need adaptive, attack-aware evaluation rather than benign accuracy alone \cite{papernot2016limitations,carlini2017towards,madry2018towards,tramer2017ensemble}. EEG decoders such as EEGNet make gradient-based BCI attacks technically meaningful \cite{lawhern2018eegnet}. Prior work shows that CNN classifiers in EEG-based BCIs are vulnerable to adversarial examples \cite{zhang2019vulnerability}. This signal-side literature does not address context-only agent injection or agreement bypass in BCI--LLM tool pipelines; our contribution is the class-specific validation protocol linking those channels.

\paragraph{Authorization boundary.}
BCI-agent security sits at the intersection of these lines. Tool-agent papers define model-mediated API use and tool selection, but usually treat the user command as text rather than a decoded biosignal. Indirect prompt-injection work studies untrusted context and retrieval, but not a second neural authorization channel whose observables are disjoint from the route context. BCI adversarial-security work studies perturbations of the neural classifier, but not whether a downstream agent route is justified by provenance, agreement, or confirmation. A route-safety claim must name the attack class, the observed variables, and the denominator that can falsify the claimed block.

\paragraph{Closest-prior separation by missing observable.}
The paper-level distinction is denominator-level rather than vocabulary-level. Each closest prior line covers one side of the BCI-to-tool interface and omits the variable that the corresponding C2 or C3 route claim requires. Indirect prompt-injection and jailbreak work \cite{greshake2023indirect,wei2023jailbroken,zou2023universal} covers untrusted text and context but does not record the decoded neural source or the route denominator, so it cannot certify EEG-authorized route safety. Tool-agent work \cite{schick2023toolformer,qin2023toolllm} covers API and tool routing but does not log biosignal provenance or source-not-target EEG cases, so it cannot certify C2 neural-route attribution. EEG adversarial-example work \cite{zhang2019vulnerability,lawhern2018eegnet} covers decoder perturbation but does not log context provenance or execution route, so it cannot certify downstream authorization. High-performance BCI and biosignal-foundation systems \cite{willett2021high,metzger2023high,yang2023biot} cover neural communication and control but do not log the attacked secondary decision or execution policy, so they cannot test agreement-as-intent under C3. Standard adversarial-robustness evaluation usually reports marginal target success; Theorem~\ref{thm:c3-risk-decomposition} shows that the route claim depends instead on the attacked joint term and the execution-policy term. The Route-Safety Audit Contract names the missing observables for each line and turns them into denominators in Table~\ref{tab:audit-protocol}.

\section{System and Threat Model}
\label{sec:threat}

A BCI-controlled tool-use agent is a tuple
\begin{equation}
\Pi = (\phi, f, f_2, A_{\mathrm{LLM}}, R),
\label{eq:pipeline}
\end{equation}
where $\phi$ maps a neural window to a representation, $f$ is a primary intent decoder, $f_2$ is an optional verification decoder, $A_{\mathrm{LLM}}$ maps decoded intent and context to a tool policy, and $R$ is a rule or monitor.

Let $x \in \R^{C \times T}$ be an EEG window, $z=\phi(x)$ its representation, $y=f(z)$ the decoded intent, $\ctx$ the external context, and $\tool=A_{\mathrm{LLM}}(y,\ctx)$ the selected tool action. A dual-channel rule executes automatically only when $f$ and $f_2$ agree; otherwise it requests confirmation or abstains.

\begin{definition}[Audit classes]
A BCI tool-use audit class names which input or log field may change and which route-safety denominator is required.
\begin{itemize}
    \item C1 direct perturbation: the audit changes $x$ or $z$ to alter $f$; this is a decoder-stress axis.
    \item C2 context-only injection: the audit changes $\ctx$ while leaving EEG-side variables unchanged; the required denominator is routed clean/untrusted episodes with provenance.
    \item C3 adaptive agreement audit: the audit perturbs the shared input seen by $f$ and $f_2$; the required denominator is source-not-target attacked cases with primary, secondary, and policy logs.
    \item C4 extraction or probing: repeated queries are logged as metadata; C4 is not a claim-bearing empirical class here.
\end{itemize}
\end{definition}

\paragraph{Audit capabilities and excluded deployment claims.}
Table~\ref{tab:threat-capabilities} fixes the audit variables used by the validation cells. The table describes controlled audit interventions, not capabilities assumed for every deployment. C2 is a context-control condition and is intentionally outside the observation set of EEG-only monitors. C3 is a white-box or audit-time adaptive condition against differentiable decoders, evaluated on the standardized raw EEG tensor seen by both decoders. This is the standard robustness-audit setting for testing whether agreement is a certificate; it is not a claim that the same perturbation channel is already available in deployed BCI hardware.

\begin{table}[t]
\centering
\caption{Class-specific audit capabilities and non-claims. ``Observed by rule'' means the variables available to the corresponding validation monitor, not all information available to a system designer.}
\label{tab:threat-capabilities}
\scriptsize
\setlength{\tabcolsep}{2pt}
\resizebox{\linewidth}{!}{
\begin{tabular}{@{}lccc@{}}
\toprule
Class & Audit intervention & Observed by rule & Load-bearing test \\
\midrule
C1 & raw EEG or embedding & EEG/logits/agreement & primary-target PGD \\
C2 & external context only & context provenance & flag uplift and route block \\
C3 & shared raw tensor, $f,f_2$ known & dual-decoder agreement & joint raw PGD \\
C4 & repeated queries & query log/rate limits & reporting obligation \\
\bottomrule
\end{tabular}}
\end{table}

\begin{table}[t]
\centering
\caption{Glossary of recurring symbols. Audit classes (C1--C4) name what the attacker may change; threat archetypes (A, B, B$_{\mathrm{xr}}$, C, C$'$) and B-strict / B-strict$^-$ name what the attacker may perturb in the conformal layer of \S\ref{sec:conformal}; experiment cells (Exp1--Exp10) are referenced throughout \S\ref{sec:experiments}.}
\label{tab:glossary}
\scriptsize
\setlength{\tabcolsep}{1.4pt}
\begin{tabular}{@{}p{.18\linewidth}p{.78\linewidth}@{}}
\toprule
Symbol & Meaning \\
\midrule
C1 & direct EEG/embedding perturbation (decoder-stress axis). \\
C2 & context-only injection without changing EEG-side variables. \\
C3 & adaptive shared-input perturbation seen by both decoders. \\
C4 & repeated-query / extraction reporting axis (not load-bearing). \\
A & confirmation pair from same subject, same session (audit baseline). \\
B & confirmation pair from disjoint subject (acquisition-isolated audit). \\
B$_{\mathrm{xr}}$ & confirmation pair from same subject, disjoint EEGMMI run (cross-run within session). \\
C & threat-archetype where the attacker also perturbs the confirmation window with same-subject pair. \\
C$'$ & same as C but with disjoint-subject pair. \\
B-strict & threat model: source-only attack with acquisition-isolated and non-controllable confirmation. \\
B-strict$^-$ & B-strict with assumption (ii) removed; attacker may perturb the confirmation window. \\
$\alpha$ & target conformal false-accept rate; $\tau_t(\alpha)$ is the corresponding split-conformal threshold. \\
$\varepsilon_{\mathrm{src}}$ & $\ell_\infty$ budget on the source command window. \\
$\varepsilon_{\mathrm{cnf}}$ & $\ell_\infty$ budget on the confirmation window (zero under B-strict). \\
$f_B, s_B$ & confirmer model and its score $f_B(X_{\mathrm{cnf}})_t$ for target $t$. \\
$\Pi$ & router; rules include primary-only, confidence gate, agreement, provenance, agreement+provenance, confirmation. \\
\bottomrule
\end{tabular}
\end{table}

\begin{table}[t]
\centering
\caption{Threat-model and reproducibility summary.}
\label{tab:threat-repro}
\scriptsize
\setlength{\tabcolsep}{1.5pt}
\begin{tabular}{@{}ll@{}}
\toprule
Item & Setting \\
\midrule
Data & EEGMMI R03 main cell; R03/04/07/08/11/12 native command cell. \\
Events & 900 R03 T1/T2 events; 5{,}400 native left/right command events. \\
Decoders & TinyEEGNet A/B gate; A/C replicate/native cells; seeds 0--9. \\
Training & 80 epochs; batch 64; no silent seed exclusion. \\
Attack & Standardized raw PGD; $\ell_\infty$ clip $[-\epsilon,\epsilon]$. \\
Raw scale & $.08$ std. units $\approx .0778$ cache units, $4.1\%$ of p95 $|x|$. \\
Grid & $\epsilon\in\{.02,.04,.08\}$; 40 steps; step .005. \\
C2 denom. & Routed cases; uplift = C2 flag $-$ clean flag. \\
C3 denom. & 120 attack and 60 unauthorized cases/seed. \\
Gate & $\ge$7 seeds; acc. $\ge$.60; disagr. $\le$.40. \\
Bypass & C3 success $\ge$.30; cond./unauth. $\ge$.70. \\
Routing & Provenance blocks C2 target route to $\le$.05. \\
\bottomrule
\end{tabular}
\end{table}

\section{Theory}
\label{sec:theory}

\begin{definition}[Route-Safety Audit Contract]
\label{def:rsac}
For an attack class $c$ and router family $\Pi$, a Route-Safety Audit Contract is a tuple
\begin{equation}
\mathcal{C}_c=(\mathcal{L}_c, D_c, E_c, G_c),
\label{eq:rsac}
\end{equation}
where $\mathcal{L}_c$ is the required per-case log schema, $D_c$ is the statistical denominator and hierarchy, $E_c$ is the route-safety event being certified, and $G_c$ is the declared pass/fail endpoint. A certificate for claim $E_c$ is sound only if $E_c$ is measurable with respect to $\mathcal{L}_c$ and the reported rate is computed over $D_c$ rather than over a projected or duplicated route-episode count.
\end{definition}

\begin{proposition}[Class-specific observability and agreement feasibility]
\label{lem:observability-feasibility}
Consider a BCI tool-use pipeline $\Pi$ with EEG-side variables $V(x)$, context $\ctx$, route $\tool=A_{\mathrm{LLM}}(y,\ctx)$, and two differentiable decoders $f,f_2$. Let $R$ be any deterministic or randomized monitor measurable with respect to $V(x)$. If a C2 attacker changes only $\ctx$ and does not change the distribution of $x$, then
\begin{equation}
\Prb[R(V(x))=1 \mid \mathrm{C2}] - \Prb[R(V(x))=1 \mid \mathrm{clean}] = 0,
\label{eq:c2-flag-uplift}
\end{equation}
although the target-route probability can change whenever $A_{\mathrm{LLM}}$ is context-sensitive. For C3, let $s_g(x)$ be the logits of decoder $g\in\{f,f_2\}$ and define
\begin{equation}
m_g(x,y_t)=\max_{y\ne y_t} s_g(x)_y-s_g(x)_{y_t},
\label{eq:margin-def}
\end{equation}
using a subgradient at logit ties. Suppose there is a shared raw-input direction $u$ with $\norm{u}_\infty\le 1$ and per-decoder local remainder bounds $\rho_g(\epsilon,u)$ satisfying
\begin{equation}
\begin{aligned}
\bigl|m_g(x+\epsilon u,y_t)-m_g(x,y_t) \\
{}-\epsilon\langle\nabla_x m_g(x,y_t),u\rangle\bigr|\le \rho_g(\epsilon,u).
\end{aligned}
\label{eq:remainder-bound}
\end{equation}
If
\begin{equation}
\begin{aligned}
\epsilon \langle -\nabla_x m_g(x,y_t),u\rangle &> m_g(x,y_t)+\rho_g(\epsilon,u) \\
& \text{for both } g\in\{f,f_2\},
\end{aligned}
\label{eq:feasibility}
\end{equation}
then the $\ell_\infty$ ball contains a point where both decoders select $y_t$ and an agreement rule accepts the target. Thus C2 and C3 fail different observability assumptions: C2 bypasses EEG-only monitoring by changing an unobserved context variable, whereas C3 satisfies the observed agreement predicate itself.
\end{proposition}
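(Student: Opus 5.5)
The proposition has two independent parts, and I would prove them separately before combining them into the concluding observability contrast.

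\emph{C2 part.} I will argue by equality of laws. By hypothesis the C2 intervention acts only on $\ctx$, and the law of $x$ under C2 equals its clean law. Since $V(x)$ is a measurable function of $x$, the pushforward law of $V(x)$ is the same under both conditions. For a deterministic $R$ measurable w.r.t.\ $V(x)$, the event $\{R(V(x))=1\}$ is the preimage of a fixed measurable set, so its probability is the same under both laws. For a randomized $R$, I write $R(v)=\mathbf{1}[U\le r(v)]$ with $U$ independent uniform randomness not controlled by the attacker. Then $\Prb[R=1]=\mathbb{E}[r(V(x))]$, which again depends only on the law of $V(x)$. This gives \eqref{eq:c2-flag-uplift}. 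To show that the target-route probability can nevertheless change, I exhibit a context-sensitive $A_{\mathrm{LLM}}$ with $A_{\mathrm{LLM}}(y,\ctx)\ne A_{\mathrm{LLM}}(y,\ctx')$ for some $y$ of positive probability. Moving $\ctx\mapsto\ctx'$ then changes $\Prb[\tool=\text{target}]$.

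\emph{C3 part.} This is a first-order expansion with an explicit remainder. For each $g$, I rearrange \eqref{eq:remainder-bound} to the upper bound
\[
m_g(x+\epsilon u,y_t)\le m_g(x,y_t)+\epsilon\langle\nabla_x m_g(x,y_t),u\rangle+\rho_g(\epsilon,u).
\]
Substituting \eqref{eq:feasibility} makes the right-hand side strictly negative, so $m_g(x+\epsilon u,y_t)<0$ for both $g\in\{f,f_2\}$. From \eqref{eq:margin-def}, a negative margin means $s_g(x')_{y_t}>\max_{y\ne y_t}s_g(x')_y$. Hence both decoders' argmax is uniquely $y_t$, with no tie-breaking needed. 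Because the same $u$ serves both decoders and $\norm{\epsilon u}_\infty\le\epsilon$, the point $x'=x+\epsilon u$ lies in the shared $\ell_\infty$ ball. The agreement rule sees $f(x')=f_2(x')=y_t$ and accepts.

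The concluding sentence then follows by comparing what each attack does to the monitor. Under C2 the monitor's input distribution is unchanged, so it cannot detect the attack. Under C3 the monitor's predicate is made true by the attack itself.

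The main obstacle is the subgradient at logit ties. There $\nabla_x m_g$ is not a true gradient, and \eqref{eq:remainder-bound} must be read with a fixed chosen subgradient. I will handle this by treating $\rho_g$ as defined relative to that chosen element: the hypothesis is assumed for it, and the argument above uses only the inequality, never differentiability. A secondary point is that the randomness of a randomized $R$ must be independent of the attack. I will state this as part of ``measurable with respect to $V(x)$''.
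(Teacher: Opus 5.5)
Your proposal is correct and matches the paper's proof: the C2 part is the same equality-of-laws argument for $V(x)$ with monitor randomness independent of the attack condition, and the C3 part is the same rearrangement of the remainder bound combined with the feasibility inequality at the shared point $x+\epsilon u$. You are slightly more careful than the paper, which only concludes non-positive margins, because you get strict negativity (so $y_t$ is the unique argmax) and you check that $x+\epsilon u$ lies in the ball. The one loose spot is the route claim: $A_{\mathrm{LLM}}(y,\ctx)\ne A_{\mathrm{LLM}}(y,\ctx')$ alone need not change $\Prb[\tool=\text{target}]$, so choose the example with $A_{\mathrm{LLM}}(y,\ctx')$ equal to the target.
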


\begin{proof}
Under the C2 intervention, $V(x)$ has the same distribution as in the clean condition because $x$ is unchanged and $R$ does not observe $\ctx$. Any monitor measurable with respect to $V(x)$, including a randomized monitor whose randomness is independent of the attack condition, therefore has the same output distribution. The flag-probability difference is zero. The route statement follows because $\tool=A_{\mathrm{LLM}}(y,\ctx)$ may vary with $\ctx$ even when $y$ and all EEG-side variables are fixed.

For C3, the remainder bound gives, for each decoder $g$,
\begin{equation}
m_g(x+\epsilon u,y_t) \le m_g(x,y_t)+\epsilon\langle\nabla_x m_g(x,y_t),u\rangle+\rho_g(\epsilon,u).
\label{eq:remainder-upperbound}
\end{equation}
The displayed feasibility condition makes the right-hand side strictly negative for both $f$ and $f_2$. Hence both margins are non-positive at the same perturbed input $x+\epsilon u$, and both decoders assign $y_t$. The execution rule accepts agreement, but the agreement now certifies consistency of two jointly driven decoders rather than user intent. The PGD objective used in the experiments is the standard cross-entropy relaxation of this per-decoder boundary-crossing condition.
\end{proof}

\begin{theorem}[C3 agreement-route risk decomposition]
\label{thm:c3-risk-decomposition}
Fix an attack family $\mathcal{D}$ over source-not-target cases and perturbations. Let
\begin{align}
A&=\{f(x+\delta)=y_t\},\\
B&=\{f_2(x+\delta)=y_t\},\\
P&=\{\pi\text{ executes target without confirmation}\},
\end{align}
and define $p_1=\Prb_{\mathcal{D}}(A)$, $p_2=\Prb_{\mathcal{D}}(B)$, $\alpha=\Prb_{\mathcal{D}}(P\mid A,B)$, and attacked dependence lift
\begin{equation}
\Delta_{\mathcal{D}}=\Prb_{\mathcal{D}}(A\cap B)-p_1p_2.
\label{eq:dependence-lift}
\end{equation}
Under agreement-plus-provenance routing,
\begin{align}
\Prb_{\mathcal{D}}(E_{C3})
&=\alpha(p_1p_2+\Delta_{\mathcal{D}}) \notag \\
&=p_1\Prb_{\mathcal{D}}(B\mid A)\alpha.
\label{eq:c3-decomposition}
\end{align}
Thus agreement reduces primary-only target routing by a factor at least $r$ if and only if $\Prb_{\mathcal{D}}(B\mid A)\alpha\le r$. Without the attacked secondary-decision log, two audits can share $p_1$, $p_2$, and all clean agreement statistics while taking any joint value allowed by the Fr\'echet bounds
\begin{equation}
\max\{0,p_1+p_2-1\}\le \Prb_{\mathcal{D}}(A\cap B)\le \min\{p_1,p_2\}.
\label{eq:frechet}
\end{equation}
Therefore clean agreement and marginal robustness do not certify C3 route safety; the audit must estimate the attacked conditional agreement term and the execution-policy term on the same source-not-target cases.
\end{theorem}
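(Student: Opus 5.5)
The plan is to treat the theorem as three separate claims: an identity for the C3 route probability, an equivalence about the reduction factor, and a non-identifiability statement. All three follow from elementary probability once the event $E_{C3}$ is pinned down under agreement-plus-provenance routing.

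\textbf{Identifying the event.} On source-not-target cases the context carries no untrusted provenance flag, since C3 perturbs only the shared raw input. The provenance check therefore passes vacuously. The target is then executed without confirmation exactly when both decoders output $y_t$ and the policy fires, so
\[
E_{C3}=A\cap B\cap P.
\]
I expect this to be the main obstacle, because the definition of $E_{C3}$ has to be read off the routing semantics rather than derived. I will state explicitly the assumption that provenance does not fire on C3 cases and that agreement is a precondition for unconfirmed execution. Without that assumption the identity becomes only an upper bound.

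\textbf{The identity.} Given $E_{C3}=A\cap B\cap P$, the chain rule gives
\[
\Prb_{\mathcal{D}}(E_{C3})=\Prb_{\mathcal{D}}(P\mid A\cap B)\,\Prb_{\mathcal{D}}(A\cap B)=\alpha\,\Prb_{\mathcal{D}}(A\cap B).
\]
Substituting the definition of $\Delta_{\mathcal{D}}$ in \eqref{eq:dependence-lift} gives the first form, $\alpha(p_1p_2+\Delta_{\mathcal{D}})$. Writing $\Prb_{\mathcal{D}}(A\cap B)=p_1\Prb_{\mathcal{D}}(B\mid A)$ gives the second form. I assume $p_1>0$; when $p_1=0$ both sides vanish. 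If $\Prb_{\mathcal{D}}(A\cap B)=0$, then $\alpha$ is undefined, but its value is irrelevant because the product is zero.

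\textbf{The reduction factor.} Primary-only routing executes the target exactly on $A$, so its route risk is $p_1$. The claim "agreement reduces primary-only target routing by a factor at least $r$" means $\Prb_{\mathcal{D}}(E_{C3})\le r\,p_1$. Dividing the identity by $p_1>0$ shows this is equivalent to $\Prb_{\mathcal{D}}(B\mid A)\,\alpha\le r$.

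\textbf{Non-identifiability.} I will first recall the Fr\'echet bounds \eqref{eq:frechet}. They follow from
\[
\Prb(A\cap B)\le\min\{\Prb(A),\Prb(B)\}
\quad\text{and}\quad
\Prb(A\cup B)\le 1 .
\]
Then, for fixed $p_1,p_2$ and any target value $q$ in that interval, I will construct a joint law on the indicator pair $(\mathbf 1_A,\mathbf 1_B)$ with cell masses
\[
q,\qquad p_1-q,\qquad p_2-q,\qquad 1-p_1-p_2+q .
\]
These masses are nonnegative precisely when $q$ satisfies \eqref{eq:frechet}.

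To realise two such audits, I will take a common clean distribution and common decoders on clean inputs, so that all clean agreement statistics coincide. The attacked cases are then coupled differently: the perturbation $\delta$ is chosen per case so as to realise the desired joint law of attacked outcomes. This is permitted because $\mathcal{D}$ is an arbitrary attack family and clean statistics are computed at $\delta=0$.

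Since $\Prb_{\mathcal{D}}(E_{C3})=\alpha q$ then ranges over $\alpha$ times the whole Fr\'echet interval, any log lacking the attacked $B$ outcome cannot determine $E_{C3}$. Hence $E_{C3}$ fails the measurability requirement of Definition~\ref{def:rsac}. This yields the final sentence of the theorem.
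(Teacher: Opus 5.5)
Your proposal is correct and follows the paper's proof essentially step for step. You identify $E_{C3}=A\cap B\cap P$ from the agreement-plus-provenance semantics, condition on $A\cap B$, substitute $\Delta_{\mathcal{D}}$, apply the chain rule, divide by $p_1$ for the reduction factor, and rely on clean statistics living at $x$ while the attacked events live at $x+\delta$. The one difference is that you prove sharpness of the Fr\'echet interval explicitly, using the $2\times 2$ cell masses and a per-case coupling in which the two audits' decoders agree on clean inputs, may differ off that support, and keep $\alpha$ fixed; the paper only asserts sharpness.
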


\begin{proof}
For source-not-target cases, agreement-plus-provenance executes the attacked target exactly when the primary decoder selects $y_t$, the secondary decoder also selects $y_t$, and the active policy permits execution without an independent confirmation block. This gives $E_{C3}=A\cap B\cap P$. Conditioning on $A\cap B$ gives $\Prb_{\mathcal{D}}(E_{C3})=\alpha\Prb_{\mathcal{D}}(A\cap B)$, the definition of $\Delta_{\mathcal{D}}$ gives the first equality, and the chain rule gives the second. Primary-only target routing has probability $p_1$, so an agreement policy achieves residual ratio $\Prb_{\mathcal{D}}(E_{C3})/p_1=\Prb_{\mathcal{D}}(B\mid A)\alpha$ when $p_1>0$; when $p_1=0$ both routes are zero. The Fr\'echet bounds are the sharp possible ranges for a joint event with fixed marginals. Clean agreement statistics are functions of decoder outputs at $x$, whereas $A$ and $B$ are target-hit events at $x+\delta$ under $\mathcal{D}$, so they do not identify the attacked joint term.
\end{proof}

\begin{theorem}[Minimal audit-schema separation for C2/C3 route claims]
\label{thm:audit-identifiability}
Consider the router family in this paper: a route rule observes a decoded command, an optional secondary decoder, a context-provenance flag, an execution policy $\pi$, and a routed action $\tool$ with execution bit $\mathrm{exec}$. Let $E_{C2}$ be the event that an untrusted context induces the requested target route, and let $E_{C3}$ be the event that a source-not-target EEG case is routed to the attacked target under the active policy. For any audit projection $O$ and any deterministic or randomized certification rule $Q$ measurable with respect to $O$, the following hold.
\begin{enumerate}[leftmargin=*,itemsep=1pt,topsep=2pt]
    \item If $O$ omits either context provenance or route/execution outcome, then $E_{C2}$ is not identifiable from $O$: there exist paired clean/untrusted audit worlds with identical $O$ and different $E_{C2}$ truth values.
    \item If $O$ omits any of the attacked secondary decision, execution policy, confirmation status, or route/execution outcome, then $E_{C3}$ is not identifiable from $O$: there exist paired source-not-target audit worlds with identical $O$ and different $E_{C3}$ truth values.
    \item The per-case log
    \begin{equation}
    \begin{aligned}
    \ell=&(s,i,k,\mathrm{src},p_{\mathrm{ctx}},y,y_2,y^{\mathrm{adv}},y_2^{\mathrm{adv}},\\
    & \pi,\tool,\mathrm{exec},\mathrm{confirm},s_B,\tau)
    \end{aligned}
    \label{eq:schema-log}
    \end{equation}
    identifies both events for this router family when rates are computed over the seed/case/context hierarchy. Thus the schema is sufficient, and every omitted field listed above is necessary for at least one of the two route-safety claims.
\end{enumerate}
Consequently, no sound nontrivial rule using a weaker projection can certify the corresponding route-safety claim: it either falsely certifies a failing world or abstains on an indistinguishable safe world.
\end{theorem}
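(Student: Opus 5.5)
The plan is to prove all three parts by explicit paired-world constructions for necessity and a direct measurability argument for sufficiency. The key fact is the one already used in Proposition~\ref{lem:observability-feasibility}. If two audit worlds induce the same law of $O$, then any rule $Q$ measurable with respect to $O$ has the same output distribution in both, provided its internal randomness is independent of the world. If one world satisfies the claimed route-safety event and the other violates it, $Q$ must either accept both, and so falsely certify the failing world, or reject both, and so abstain on a safe world. This yields the ``Consequently'' clause once parts 1 and 2 are established. For each omitted field, it therefore suffices to exhibit two worlds that agree on every retained coordinate of $\ell$ in \eqref{eq:schema-log} and differ only in the omitted one, while the indicator of $E_{C2}$ or $E_{C3}$ flips.

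For part 1, I would fix a clean EEG window, decoded command $y$, and policy $\pi$. In world $W_0$ the context is clean ($p_{\mathrm{ctx}}=\text{trusted}$) and $A_{\mathrm{LLM}}(y,\ctx)$ routes to the user's command. In world $W_1$ the context is untrusted and the context-sensitive $A_{\mathrm{LLM}}$ routes to the requested target with $\mathrm{exec}=1$. To cover the case where provenance is omitted, I would choose $W_0$ so that the trusted context also names that same tool. Then $(\tool,\mathrm{exec})$ coincide across the two worlds while $E_{C2}$ holds only in $W_1$, since $E_{C2}$ requires untrusted provenance. To cover the case where route/exec is omitted, I would keep provenance untrusted in both worlds and let the router either execute the target or not. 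Everything upstream is identical, and $E_{C2}$ differs. Part 2 follows the same pattern on a source-not-target case under agreement-plus-provenance routing, using $E_{C3}=A\cap B\cap P$ from Theorem~\ref{thm:c3-risk-decomposition}:
\begin{itemize}
\item to remove $y_2^{\mathrm{adv}}$, I would fix $y^{\mathrm{adv}}=y_t$ and set $y_2^{\mathrm{adv}}=y_t$ versus $y_2^{\mathrm{adv}}\ne y_t$, which is exactly the Fr\'echet-range freedom of Theorem~\ref{thm:c3-risk-decomposition} realized pointwise;
\item to remove $\pi$, I would let the policy be auto-execute versus require-confirmation;
\item to remove $\mathrm{confirm}$ (together with $s_B,\tau$ if those are treated as one confirmation field), I would let the confirmation pass versus block under the same confirm-capable policy;
\item to remove $(\tool,\mathrm{exec})$, I would let the router execute versus not.
\end{itemize}
I must check in each case that the retained fields can be held fixed consistently with the router's deterministic semantics. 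Where a retained field would otherwise determine the omitted one, I would use a randomized policy or an unlogged stage to break that determination.

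For part 3, I would show that each event is a measurable function of $\ell$. $E_{C2}$ equals the indicator of $p_{\mathrm{ctx}}=\text{untrusted}$, $\tool=$ requested target, and $\mathrm{exec}=1$. $E_{C3}$ equals the indicator of $y\ne y_t$ (source-not-target, read off $\mathrm{src}$ and $y$), $\tool=y_t$, $\mathrm{exec}=1$, and no blocking confirmation (read off $\pi$, $\mathrm{confirm}$, $s_B\ge\tau$). Rates computed over the $(s,i,k)$ hierarchy are then empirical means of these indicators over the correct denominators, as Definition~\ref{def:rsac} requires.

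I expect the main obstacle to be the consistency requirement in the necessity constructions. For the given router family, some omitted fields are functionally determined by retained ones. For example, under agreement routing $\mathrm{exec}$ follows from $(y^{\mathrm{adv}},y_2^{\mathrm{adv}},\pi,\mathrm{confirm})$, so dropping $y_2^{\mathrm{adv}}$ alone while keeping $\mathrm{exec}$ may not produce a genuine pair. My first attempt would be to read ``omits'' as omitting the field together with whatever retained fields determine it. If that fails, I would weaken the claim to the ``at least one of the two route-safety claims'' form stated in part 3. In that weaker form, each field only needs one projection under which the pair exists, and I would exhibit that projection explicitly for each field.
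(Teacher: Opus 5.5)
Your plan follows the paper's proof. It uses paired worlds with identical projections for parts 1--2, per-episode measurability of the two Boolean predicates in $\ell$ for part 3, and the equal-output-law argument for the final clause. Your C2 pair is tighter than the paper's. The paper's $W_0$ routes to the decoded command, so it differs from $W_1$ in $\tool$ whenever the projection keeps the route. Letting the trusted context request the same tool repairs this. You should also hold the context index $k$ fixed, because in the paper's fixed suite $k$ determines provenance. In part 3, read source-not-target off $\mathrm{src}\ne y_t$, not off the clean decoded $y$.

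The obstacle you raise at the end is real, and the paper's proof does not get past it either. Its C3 worlds differ at once in the attacked secondary decision (or policy/confirmation) and in the execution outcome. They are therefore indistinguishable only under projections that drop that whole bundle, so the paper implicitly adopts your first reading. Of your repairs, only that reading can work.

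A randomized router or an unlogged stage breaks the dependence of $(\tool,\mathrm{exec})$ on $(y^{\mathrm{adv}},y_2^{\mathrm{adv}},\pi,\mathrm{confirm},p_{\mathrm{ctx}})$, so it does rescue the ``omit route/exec outcome'' cases. It cannot help the other cases. Once $O$ keeps $\mathrm{src}$, $\tool$ and $\mathrm{exec}$, $E_{C3}$ is itself a function of retained coordinates. No pair of worlds then exists when only $y_2^{\mathrm{adv}}$, $\pi$, or $\mathrm{confirm}$ is dropped. So the literal ``any of'' clause of part 2 is false for those three fields. Your ``one projection per field'' fallback is met vacuously by the empty projection, so it expresses no necessity.

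What can actually be proved is disjunctive. $E_{C3}$ is identified once $O$ contains $\mathrm{src}$ together with either the route/exec outcome or, for a deterministic router, the full decision bundle. The fields $y_2^{\mathrm{adv}}$, $\pi$, $\mathrm{confirm}$ are necessary only relative to projections that already omit the outcome. They are separately necessary for estimating $\Prb(B\mid A)$ and $\alpha$ in Theorem~\ref{thm:c3-risk-decomposition}, which is the role the paper actually relies on.
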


\begin{proof}
For C2, construct two worlds with the same EEG window, primary and secondary decoder outputs, confidence, clean source command, and EEG-side monitor variables. In $W_0$, the context is trusted or benign and the route follows the decoded command; in $W_1$, the context is untrusted and requests the target route. If the projection omits provenance, the trusted and untrusted contexts have the same projection. If it omits route or execution outcome, a blocked and executed route have the same projection. In both cases $O(W_0)=O(W_1)$ while $E_{C2}$ differs.

For C3, construct worlds with the same clean source command, clean agreement state, and clean route record. In $W_0$, the attacked secondary decoder does not agree with the target, or the active policy requires confirmation and confirmation is absent, or the target is not executed. In $W_1$, the attacked primary and secondary decoders both select the target and the policy executes the target route. If the projection omits the attacked secondary decision, policy, confirmation status, or route/execution outcome, these worlds remain indistinguishable even though $E_{C3}$ differs.

The listed log contains the seed $s$, case $i$, and route context $k$ needed to define denominators; the clean source command $\mathrm{src}$ needed for source-not-target conditioning; provenance $p_{\mathrm{ctx}}$; clean and attacked decoder decisions; policy $\pi$; route and execution outcome; confirmation status; and the confirmer score $s_B = f_B(X_{\mathrm{cnf}})_t$ together with the active conformal threshold $\tau = \tau_t(\alpha)$, which together reproduce the conformal accept decision used in \S\ref{sec:conformal}. Reading these fields decides the Boolean predicates defining $E_{C2}$ and $E_{C3}$ for each logged episode. Rates over the validation cells are then ordinary averages over the specified hierarchy. Finally, because any weaker projection above admits paired worlds with identical projected logs and different event truth values, a measurable certification rule has the same output distribution in both worlds. Positive certification probability gives a false certificate in the failing world; zero probability gives no nontrivial certificate in the safe world.
\end{proof}

\begin{corollary}[Minimal sound certificate for the reported C2/C3 claims]
\label{cor:rsac-soundness}
For the router family and C2/C3 events in Theorem~\ref{thm:audit-identifiability}, the contract whose log is $\ell$ and whose denominators are the seed/case/context hierarchies in Table~\ref{tab:audit-protocol} is sufficient for a sound route-safety certificate. Any certificate that omits one of the theorem's necessary fields for the corresponding class is not sound for that class: it can pass on a projected log while an indistinguishable audit world has the opposite route-safety truth value.
\end{corollary}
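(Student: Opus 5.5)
The corollary follows from Theorem~\ref{thm:audit-identifiability} once we check what Definition~\ref{def:rsac} asks of a sound certificate. By that definition, a certificate for $E_c$ is sound exactly when two conditions hold: $E_c$ is measurable with respect to the log schema $\mathcal{L}_c$, and the reported rate is computed over the declared denominator $D_c$. The proof has two halves, sufficiency and necessity, and each half rests on one part of the theorem.

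\textbf{Sufficiency.} Set $\mathcal{L}_c=\ell$ from \eqref{eq:schema-log}, for $c\in\{\mathrm{C2},\mathrm{C3}\}$.
\begin{enumerate}[leftmargin=*,itemsep=1pt,topsep=2pt]
    \item Part~3 of Theorem~\ref{thm:audit-identifiability} says that each episode's indicator of $E_{C2}$ and $E_{C3}$ is a Boolean function of the fields of $\ell$. Concretely, $E_{C2}$ is decided by $p_{\mathrm{ctx}}$, $\tool$ and $\mathrm{exec}$. $E_{C3}$ is decided by:
    \begin{itemize}[leftmargin=*,itemsep=0pt,topsep=1pt]
        \item $\mathrm{src}\neq y_t$, which conditions on source-not-target cases;
        \item $y^{\mathrm{adv}}$ and $y_2^{\mathrm{adv}}$, the attacked decoder decisions;
        \item $\pi$, $\mathrm{confirm}$, $s_B$ and $\tau$, which together give the confirmation decision;
        \item $\mathrm{exec}$.
    \end{itemize}
    So each event is $\sigma(\ell)$-measurable.
    \item The fields $(s,i,k)$ index the seed/case/context hierarchy of Table~\ref{tab:audit-protocol}. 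The reported rate is therefore an ordinary average of measurable indicators over $D_c$, and no episode is projected or duplicated.
    \item Both clauses of Definition~\ref{def:rsac} now hold, so the certificate is sound.
\end{enumerate}

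\textbf{Necessity.} Suppose a certificate omits a field that the theorem lists as necessary for class $c$. For C2 these fields are provenance and the route/exec outcome. For C3 they are the attacked secondary decision, the policy, the confirmation status, and the route/exec outcome.
\begin{enumerate}[leftmargin=*,itemsep=1pt,topsep=2pt]
    \item Parts~1--2 of the theorem supply paired worlds $W_0$ and $W_1$ with identical projection $O(W_0)=O(W_1)$ but different truth values of $E_c$.
    \item Any rule $Q$ measurable with respect to $O$, including a randomized one whose randomness is independent of the world, has the same output distribution on $W_0$ and $W_1$.
    \item If $Q$ passes with positive probability, it passes on the failing world. That is a false certificate, so $Q$ is not sound.
    \item If $Q$ passes with probability zero on both worlds, it never certifies the safe world either, so $Q$ is trivial on that pair. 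This is the ``opposite truth value'' clause of the corollary.
\end{enumerate}

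\textbf{Main obstacle.} The delicate point is the rate-level version of the necessity argument: a certificate is a statement about aggregate rates over $D_c$, not about single episodes. To handle this, I would lift the per-episode paired worlds to whole audits. Take an audit in which every case is instantiated as $W_0$, and one in which every case is instantiated as $W_1$. The two projected logs are then identical, while the true rate is $0$ in one audit and $1$ in the other. This crosses any pass/fail endpoint $G_c$ that lies strictly between $0$ and $1$, such as the $\le .05$ routing threshold. A single world pair therefore rules out any nontrivial endpoint.

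I also need to confirm that the theorem's world constructions respect the router family's consistency constraints, for example that the recorded $\mathrm{exec}$ is consistent with $\pi$ and $\mathrm{confirm}$. I expect this to be routine, because each constructed world is a legal execution of the router.
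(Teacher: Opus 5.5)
Your sufficiency half and the false-accept-or-abstain dichotomy in the necessity half are the paper's own two-line proof. Lifting each case's paired worlds to whole audits (every case instantiated as $W_0$ versus every case as $W_1$, giving identical projected logs and true rates $0$ and $1$) is a genuine improvement, because the paper argues only world by world although $G_c$ is a rate-level endpoint. One correction: your item~4 gives triviality, not unsoundness, since a rule that never certifies is vacuously sound. So ``not sound'' in the statement must be read as ``not sound unless it never certifies'', which is also what the paper's ``false acceptance or abstention'' wording requires.

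The consistency check you close with, which you expect to be routine, would fail if you ran it. Under Theorem~\ref{thm:audit-identifiability}'s own definition, $E_{C3}$ (a source-not-target case routed to the attacked target under the active policy) is a function of $\mathrm{src}$, the attacked target fixed by the case, $\tool$ and $\mathrm{exec}$. Suppose a projection keeps these fields and drops only the attacked secondary decision $y_2^{\mathrm{adv}}$. Then $E_{C3}$ is already measurable, and no two legal executions can share that projection while disagreeing on $E_{C3}$. In the paper's construction, $W_1$ executes the target and $W_0$ does not (its secondary decoder refuses, so agreement does not execute). The two worlds therefore differ exactly on the retained $\mathrm{exec}$ field. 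Router consistency is precisely what rules out the pairs you need, so checking it would refute field-by-field necessity rather than confirm it.

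Your necessity half therefore stands only by citing parts~1--2 of the theorem as a black box, which is all the paper's proof does. What you can actually verify is a weaker claim: a C3 certificate is unsound when its projection fails to determine the executed route. That happens when it drops the route/exec outcome and the route cannot be reconstructed from the remaining decoder, provenance, policy and confirmation fields. Either rely on the theorem without claiming the check, or restate the necessity clause for these minimal sets.
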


\begin{proof}
Sufficiency follows because $\ell$ makes $E_{C2}$ and $E_{C3}$ measurable per episode and Table~\ref{tab:audit-protocol} fixes the denominators over which rates are averaged. Necessity follows from the paired-world construction in Theorem~\ref{thm:audit-identifiability}: a weaker certificate is measurable with respect to a projection that is identical in two worlds with different event truth values, so it cannot certify both without either false acceptance or abstention.
\end{proof}

\begin{corollary}[Denominator-map correspondence]
\label{cor:nonidentifying-aggregates}
Each invalid aggregate in Table~\ref{tab:denominator-map} is a non-identifying projection for its stated route-safety claim. Replacing it with the listed denominator is exactly the schema repair required by Theorem~\ref{thm:audit-identifiability}.
\end{corollary}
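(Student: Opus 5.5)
The statement to prove is Corollary~\ref{cor:nonidentifying-aggregates}. Table~\ref{tab:denominator-map} is not shown above, so I will treat it generically. It is a list of rows, each pairing a route-safety claim ($E_{C2}$ or $E_{C3}$), an invalid aggregate that some prior evaluation reports, and a replacement denominator. The plan is to reduce every row to Theorem~\ref{thm:audit-identifiability}. For each row, I write the invalid aggregate as a function of a projection $O$ of the per-case log $\ell$ in \eqref{eq:schema-log}. Examples are clean agreement rate, marginal target success $p_1$, EEG-side flag rate, or route counts pooled over duplicated route episodes. I then show that $O$ omits at least one field that the theorem lists as necessary for the row's class. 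Part~1 or Part~2 of that theorem then supplies paired worlds with identical $O$ and opposite event truth values. Since the aggregate is $O$-measurable, it is a non-identifying projection.

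I will carry this out in three groups. First, the C2 aggregates, such as EEG-only monitor flag rates or unrouted episode counts. These drop either $p_{\mathrm{ctx}}$ or $(\tool,\mathrm{exec})$. For flag rates I also invoke \eqref{eq:c2-flag-uplift} from Proposition~\ref{lem:observability-feasibility}, which shows the aggregate is literally invariant under the C2 intervention. Second, the C3 aggregates, such as clean agreement and marginal robustness $p_1$ or $p_2$. These drop $y_2^{\mathrm{adv}}$ jointly with $y^{\mathrm{adv}}$, or drop $\pi$ and $\mathrm{confirm}$. Here I use Theorem~\ref{thm:c3-risk-decomposition}: fixing $p_1$, $p_2$ and all clean statistics still leaves $\Prb_{\mathcal{D}}(A\cap B)$ free over the Fr\'echet range \eqref{eq:frechet}, so $\Prb_{\mathcal{D}}(E_{C3})$ is not determined. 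Third, denominator-only errors, meaning rates computed over projected or duplicated route episodes instead of the seed/case/context hierarchy. For these I build two audits with the same pooled numerator and denominator but different per-case composition. Because Definition~\ref{def:rsac} requires rates over $D_c$, such an aggregate fails the contract even when the event itself is measurable.

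For the ``exactly the repair'' direction, I check that each listed replacement denominator is computed over $D_c$ from the fields of $\ell$. Part~3 of Theorem~\ref{thm:audit-identifiability} and Corollary~\ref{cor:rsac-soundness} then give sufficiency. I read ``exactly'' as both necessary and sufficient. Necessity is the first direction above. Sufficiency holds because the repair restores precisely the omitted field or hierarchy level and nothing beyond $\ell$ is needed.

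The main obstacle is the third group. Theorem~\ref{thm:audit-identifiability} is stated for omitted fields, not for wrong averaging, so a duplicated-episode aggregate is not directly covered. For this case I would show that a duplicated-count rate is a weighted average with case-dependent weights $w_i$. I would then construct two worlds that agree on every pooled count but in which the per-case event indicators are permuted across cases with different $w_i$. The two worlds then have equal pooled rates but different hierarchy-level rates, which exhibits non-identifiability in the same paired-world form as the theorem.
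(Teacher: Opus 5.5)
Your reduction of the C2 and C3 rows to the paired worlds of Parts~1--2 of Theorem~\ref{thm:audit-identifiability} matches the paper's proof. That proof goes row by row through Table~\ref{tab:denominator-map}, names the field each aggregate discards, and notes that adding the field breaks $O(W_0)=O(W_1)$. You are also right that the route-episode-count row is not covered by the theorem; the paper only asserts it. However, your construction for that row fails as written. With fixed duplication weights the pooled rate is $\sum_i w_i\mathbf{1}[E_i]/\sum_i w_i$. Permuting indicators between cases with $w_i\ne w_j$ therefore changes the pooled numerator and leaves the unweighted case-level rate unchanged, which is the reverse of what you need. It also targets the wrong failure. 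In the paper's design every case is expanded over the same $30$ contexts and $6$ policies ($10\times 80\times 30\times 6=144{,}000$ episodes). So all $w_i$ are equal, and the pooled and hierarchy-level point estimates coincide. The table's repair, ``seed-level numerators and gates,'' is about statistical support instead. Paired worlds that work keep the pooled episode numerator and denominator fixed and move positive cases between seeds. Equal weights mean every pooled count is preserved, but the per-seed numerators change. Hence whether seed-level gates (e.g.\ the $\ge 7$-of-$10$ requirement) pass changes, and so does the seed-level bootstrap interval. If you want a point-estimate separation in an unbalanced design, you must change how many cases are positive and compensate through the weights; no permutation does it.

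Two further corrections. First, your sufficiency step checks that each repair is ``computed over $D_c$ from the fields of $\ell$.'' This does not separate valid from invalid aggregates, because every invalid aggregate is also a function of $\ell$. Restoring only the single field the table lists is also not enough by Part~2 itself: clean agreement plus the attacked secondary decision still omits $\pi$, $\mathrm{confirm}$, and the route/execution outcome. What Part~3 and Corollary~\ref{cor:rsac-soundness} certify is the rate obtained by reading the indicator of $E_c$ off the full log $\ell$ per case and averaging over $D_c$. State sufficiency that way, or settle for the paper's weaker claim that the listed observable breaks the specific equality $O(W_0)=O(W_1)$. Second, the table's last row does not reduce to Part~2 the way your C3 group suggests. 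That row is a confirmation route rate reported without oracle/user-confirmation status. The aggregate already records the route outcome under the confirmation policy, so its paired worlds share that outcome. They differ only in whether the block came from an independent user signal, a clean-label oracle, or automatic abstention, which is how the paper argues that row.
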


\begin{proof}
C2 EEG flag rate discards context provenance and route outcome, so the C2 paired worlds remain indistinguishable. Clean agreement rate discards the attacked secondary decision and post-attack route outcome, so the C3 paired worlds remain indistinguishable. C3 primary-target success alone discards whether the target was actually routed and whether confirmation was required. Route episode count alone discards the seed/case/context hierarchy needed to distinguish statistical support from repeated route templates. Confirmation route rate without confirmation status discards whether the block used an independent user signal, an oracle upper bound, or an automatic abstention rule. Adding the table's missing observables breaks the corresponding equality $O(W_0)=O(W_1)$ and makes the route claim testable.
\end{proof}

The Route-Safety Audit Contract is intentionally minimal. Its new object is the class-specific route audit: the attacked dependence lift that determines whether agreement lowers C3 route risk, plus the denominator discipline needed to execute the audit. It is not a general impossibility theorem or a new adversarial optimizer. It states which logged variables are sufficient to support a route-safety claim and which common aggregate metrics are non-identifying. The point is to prevent a false certificate: an EEG-only C2 flag rate can be exactly unchanged while route risk changes, and a clean agreement rate can be high while attacked agreement routes an unauthorized target. Theorem~\ref{thm:audit-identifiability} gives both sides of the contract: it proves necessity for the C2 and C3 fields that separate paired audit worlds, and sufficiency for the log schema used by the experiments. The C3 condition in Proposition~\ref{lem:observability-feasibility} is not assumed to be common; its non-vacuity is a measurable property of a decoder pair, target, and perturbation budget. The experiments therefore report decoder gates, target margins, shared-descent diagnostics, perturbation-strength sweeps, route-policy outcomes, confirmation status, and seed/case/context hierarchy rather than treating clean agreement as a certificate. Table~\ref{tab:denominator-map} lists the aggregate-to-denominator repairs used throughout the validation protocol. This denominator discipline prevents two common but misleading evaluations: counting EEG-side flags as C2 security, and treating clean agreement as an adaptive C3 certificate. The lemma predicts the zero C2 attack-attributable EEG flag uplift reported in the real-EEG cells, and it predicts that C3 should become easier as a shared raw-input budget can satisfy the per-decoder target inequalities on more cases; Table~\ref{tab:c3-strength-sweep} and the margin bridge in Table~\ref{tab:evidence-spine} test this feasibility prediction.

\subsection{Split-Conformal Confirmation FAR Control}
\label{sec:conformal}

The non-oracle confirmation proxy of \S\ref{sec:routing} can be cast as split-conformal calibration on confirmer scores. Proposition~\ref{thm:conformal-far} below is the textbook split-conformal coverage statement \cite{vovk2005algorithmic,romano2019conformalized} adapted to the binary confirmer scoring rule; we restate it for completeness because the rest of the paper invokes it as a layer-tool on top of the audit contract. The original contribution of this subsection is not the proposition itself but the explicit operational assumptions it requires (threat model B-strict) and the threat-archetype matrix in \S\ref{sec:phase5} that makes those assumptions empirically falsifiable on a BCI confirmation channel.

\begin{definition}[Strict held-out confirmation calibration]
\label{def:strict-calibration}
Let $f_B$ be a confirmer trained only on the train index. For each target label $t\in\{0,1\}$, let
$
S_i^{(t)}=f_B(X_i)_t
$
for calibration windows with $Y_i\ne t$, and let $m_t$ be the count of such negative calibration windows. The conformal threshold at level $\alpha\in(0,1)$ is
\begin{equation}
\tau_t(\alpha)=\bigl(\lceil(1-\alpha)(m_t+1)\rceil\bigr)\text{-th order statistic of }\{S_i^{(t)}\}.
\label{eq:conformal-tau}
\end{equation}
A requested target $t$ on a confirmation window $X_{\mathrm{cnf}}$ is accepted iff $\arg\max_y f_B(X_{\mathrm{cnf}})_y=t$ and $f_B(X_{\mathrm{cnf}})_t\ge\tau_t(\alpha)$.
\end{definition}

\begin{proposition}[Split-conformal confirmation FAR control \cite{vovk2005algorithmic,romano2019conformalized}]
\label{thm:conformal-far}
Suppose, for each target $t$, the negative calibration scores $\{S_i^{(t)}:Y_i\ne t\}$ and a future negative confirmation score $S_{m_t+1}^{(t)}$ are exchangeable. Under Definition~\ref{def:strict-calibration},
\begin{equation}
\Prb\bigl[S_{m_t+1}^{(t)}\ge\tau_t(\alpha)\;\bigm|\;Y_{m_t+1}\ne t\bigr]\le\frac{\lfloor\alpha(m_t+1)\rfloor}{m_t+1}\le\alpha.
\label{eq:conformal-coverage}
\end{equation}
The argmax acceptance condition makes the false-accept event a subset of the threshold-exceedance event, so the same upper bound applies to confirmation false accepts.
\end{proposition}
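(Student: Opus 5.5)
The plan is the standard rank argument for split-conformal prediction, run separately for each target $t$ on the negative calibration pool $\{S_i^{(t)}:Y_i\ne t\}$. Fix $t$, write $m=m_t$, and set $k=\lceil(1-\alpha)(m+1)\rceil$, so that $\tau_t(\alpha)=S_{(k)}$, the $k$-th smallest of $S_1^{(t)},\dots,S_m^{(t)}$. If $k>m$, the threshold is $+\infty$ by convention. Then the exceedance probability is zero, and $\lfloor\alpha(m+1)\rfloor=0$ in that regime, since $k>m$ forces $(1-\alpha)(m+1)>m$, i.e.\ $\alpha(m+1)<1$. So the bound holds trivially there, and I assume $k\le m$ for the rest.

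First, I reduce the event to a rank event on the augmented sample $S_1,\dots,S_{m+1}$, where $S_{m+1}$ is the future negative score. If $S_{m+1}\ge S_{(k)}$, then at most $m-k$ calibration scores are strictly larger than $S_{m+1}$, ignoring ties. Equivalently, the rank of $S_{m+1}$ among all $m+1$ scores is at least $k+1$.

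Second, I use exchangeability, conditional on $Y_{m+1}\ne t$, which is exactly the hypothesis. Assuming almost-sure distinct values, the rank of $S_{m+1}$ is uniform on $\{1,\dots,m+1\}$. Hence
\[
\Prb[\mathrm{rank}\ge k+1]=\frac{m+1-k}{m+1}.
\]
Since $k\ge(1-\alpha)(m+1)$ and $k$ is an integer, $m+1-k$ is an integer at most $\alpha(m+1)$. It is therefore at most $\lfloor\alpha(m+1)\rfloor$, which gives both inequalities in the display.

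The main obstacle is ties. With ties, the event $S_{m+1}\ge S_{(k)}$ includes equality, and the rank is no longer uniform. My first attempt is the usual fix: break ties with independent uniform randomization, which preserves exchangeability. Under that randomization, equality with the threshold score can only enlarge the event by ties at the boundary. So I would either state the bound for continuous scores, or inflate via randomized ranks and check that the non-strict inequality still yields $(m+1-k+\#\text{ties})$ favourable positions. I would then fall back to assuming a.s.\ distinct confirmer outputs, which is the generic case for softmax scores, and remark that the randomized-tie version restores the bound in general.

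Finally, the false-accept event is $\{\arg\max_y f_B(X_{\mathrm{cnf}})_y=t\}\cap\{f_B(X_{\mathrm{cnf}})_t\ge\tau_t(\alpha)\}$. This is a subset of the exceedance event, so monotonicity of probability transfers the same bound to confirmation false accepts.
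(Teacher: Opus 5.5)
Your proposal is correct and follows the same argument as the paper's proof: the rank of $S_{m_t+1}^{(t)}$ is uniform under exchangeability, and monotonicity handles the argmax condition; you also supply details the paper omits, namely the $k>m_t$ (infinite-threshold) convention and the integer bound $m_t+1-k\le\lfloor\alpha(m_t+1)\rfloor$. Your explicit almost-surely-distinct-scores assumption is exactly what the paper's ``rank is uniform'' step uses tacitly, but note that randomized tie-breaking only rescues a randomized acceptance rule, not the deterministic non-strict rule $f_B(X_{\mathrm{cnf}})_t\ge\tau_t(\alpha)$ of Definition~\ref{def:strict-calibration}, which can exceed the bound when scores tie (e.g.\ a constant score).
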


\begin{proof}
Standard split-conformal coverage. Exchangeability of $S_1^{(t)},\dots,S_{m_t}^{(t)},S_{m_t+1}^{(t)}$ implies the rank of $S_{m_t+1}^{(t)}$ is uniform on $\{1,\dots,m_t+1\}$, so $\Prb[S_{m_t+1}^{(t)}\ge\tau_t(\alpha)]\le\lfloor\alpha(m_t+1)\rfloor/(m_t+1)$. The argmax condition only removes accepts, so it cannot increase the FAR upper bound.
\end{proof}

\begin{corollary}[Threat model B-strict for unauthorized confirmed routes]
\label{cor:far-route}
Assume (i) acquisition isolation: the confirmation window is acquired from a session whose marginal score distribution is exchangeable with the calibration negatives the attacker cannot influence; and (ii) confirmation-channel non-controllability: the attacker may perturb the source command window in an $\ell_\infty$ ball but cannot perturb the confirmation window. Then for any source-window attacker $\mathcal{A}$ within the budget,
\begin{equation}
\Prb\bigl[\text{unauthorized confirmed route to }t\bigr]\;\le\;\alpha,
\label{eq:route-bound}
\end{equation}
up to the discretization in Proposition~\ref{thm:conformal-far}, regardless of $\mathcal{A}$'s strength.
\end{corollary}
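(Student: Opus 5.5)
The plan is to reduce the route bound to Proposition~\ref{thm:conformal-far} by showing that an unauthorized confirmed route to $t$ requires a false accept on the confirmation window, and that this false accept is a function only of data the attacker cannot touch. First I fix the router semantics from \S\ref{sec:routing}: under the confirmation policy, a route to $t$ executes only if the confirmer accepts $t$ on $X_{\mathrm{cnf}}$ in the sense of Definition~\ref{def:strict-calibration}. An unauthorized route means the user's true intent on the confirmation window is some $Y_{m_t+1}\ne t$. Hence
\[
\{\text{unauthorized confirmed route to }t\}\subseteq\bigl\{Y_{m_t+1}\ne t,\ \arg\max_y f_B(X_{\mathrm{cnf}})_y=t,\ S^{(t)}_{m_t+1}\ge\tau_t(\alpha)\bigr\},
\]
with $S^{(t)}_{m_t+1}=f_B(X_{\mathrm{cnf}})_t$. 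This inclusion holds no matter what the primary and secondary decoders output on the perturbed source window. The source attacker can at most cause the router to \emph{request} confirmation for $t$; it cannot make the confirmation accept.

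Second, I would show that the right-hand event does not depend on $\mathcal{A}$. By assumption (ii), $X_{\mathrm{cnf}}$ is not a function of the perturbation $\delta$. The confirmer $f_B$ is trained on the train index and the threshold $\tau_t(\alpha)$ is computed from calibration windows, so neither depends on $\delta$ either. Thus, conditional on any realization of the source window and attack, the triple $(f_B,\tau_t(\alpha),S^{(t)}_{m_t+1})$ has the same law as in the unattacked setting. If $\mathcal{A}$ is randomized, I would condition on its internal randomness and then average; this needs $\mathcal{A}$'s randomness to be independent of the calibration and confirmation draws, which (i) and (ii) supply. Assumption (i) then gives exactly the exchangeability hypothesis of Proposition~\ref{thm:conformal-far} for the negative calibration scores together with the future negative score.

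Third, I would apply Proposition~\ref{thm:conformal-far}, which bounds the conditional probability by $\lfloor\alpha(m_t+1)\rfloor/(m_t+1)\le\alpha$. Multiplying by the probability that the attack triggers a confirmation request for $t$, which is at most $1$, gives \eqref{eq:route-bound} uniformly over $\mathcal{A}$. The ``discretization'' clause is the $\lfloor\cdot\rfloor$ factor.

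The main obstacle is the second step. The attacker chooses $t$ and the source perturbation adaptively, possibly after seeing the source window, so I must make sure the event $\{Y_{m_t+1}\ne t\}$ and the selection of $t$ do not break exchangeability. I would handle this by conditioning on the requested target $t$ and applying the per-$t$ guarantee. This is valid because $t$ is a function only of source-side quantities, and assumption (i) must be read as saying those are independent of the confirmation and calibration scores given the label.

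I would also note a boundary on the claim. If (ii) fails (B-strict$^-$), the inclusion still holds, but the confirmation score becomes $\delta$-dependent and the exchangeability step breaks. This is consistent with the paper's later necessity remark.
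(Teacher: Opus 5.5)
Your proof is correct and follows the same route as the paper's. The unauthorized confirmed route lies inside the confirmer's false-accept event. Assumption (ii) leaves $X_{\mathrm{cnf}}$, and hence the strictly calibrated score and $\tau_t(\alpha)$, untouched by the source attack, and assumption (i) supplies the exchangeability needed to invoke Proposition~\ref{thm:conformal-far}. Your additional conditioning on an adaptively chosen target $t$ and on the attacker's randomness makes explicit an independence reading of (i) that the paper's short proof leaves implicit; in the paper's binary source-not-target cells $t$ is fixed by the source label, so that stronger reading is not actually needed there.
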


\begin{proof}
Source-window perturbations leave $X_{\mathrm{cnf}}$ unchanged by assumption (ii), so the marginal distribution of $S_{m_t+1}^{(t)}$ is exchangeable with the negatives by assumption (i). Proposition~\ref{thm:conformal-far} applies. The unauthorized confirmed-route event is a subset of the FAR event, so the same upper bound holds.
\end{proof}

The two assumptions are not cosmetic. Removing either is an empirical adversarial possibility test in \S\ref{sec:phase5}: the same audit with attacker-perturbable confirmation drives the route rate to nearly $1$, and the same audit with same-subject same-session confirmation pairing remains controlled at the cost of being an audit setup rather than a deployable confirmation channel.

\paragraph{Calibration consistency under shrinking $n_c$.}
Proposition~\ref{thm:conformal-far} also predicts that empirical FAR on a held-out test set should remain below $\alpha$ as the calibration size shrinks, with discretization error $O(1/m_t)$. The same theorem predicts $\varepsilon$-invariance under acquisition isolation: the confirmer never sees the attacked window, so the empirical FAR upper bound does not depend on the attacker's $\varepsilon$. Both predictions are tested in \S\ref{sec:phase5}.

\paragraph{Necessity of strict calibration.}
If $f_B$ is trained on $\text{train}\cup\text{cal}$, the calibration negatives are no longer exchangeable with future test negatives because the model has seen the calibration windows during fitting. The marginal coverage of Proposition~\ref{thm:conformal-far} no longer holds, and the empirical FAR systematically exceeds $\alpha$. \S\ref{sec:phase5} reports the leaky-vs-strict gap: at $\alpha=.05$, leaky calibration produces empirical FAR $0.0912$ with $95\%$ Wilson CI $[0.073,0.113]$, while strict calibration produces $0.0525$ with CI $[0.039,0.070]$.

\subsection{Necessity of Confirmation-Channel Non-Controllability}
\label{sec:necessity}

Proposition~\ref{thm:conformal-far} and Corollary~\ref{cor:far-route} are stated under threat model B-strict, which combines (i) acquisition isolation and (ii) confirmation-channel non-controllability. Proposition~\ref{thm:necessity} below is a folklore PGD-feasibility argument, restated here to make the necessity of condition (ii) explicit: if the attacker can perturb the confirmation window inside any nontrivial $\ell_\infty$ budget for any non-degenerate confirmer, no FAR upper bound of the form ``$\alpha + g(n_c, \delta)$'' independent of attacker actions can hold. The empirical contribution is the C/C$'$ archetypes in \S\ref{sec:phase5}, not the proposition itself.

Threat model B-strict$^-$ replaces (ii) by (ii$^-$): $\mathcal{A}$ may perturb the confirmation window in an $\ell_\infty$ ball, $\|\delta_{\mathrm{cnf}}\|_\infty \le \varepsilon$.

\begin{proposition}[Necessity of confirmation-channel non-controllability]
\label{thm:necessity}
Assume $f_B$ is differentiable in its input on a neighborhood of the test confirmation distribution support, and that for every cal-selected threshold $\tau \in (0, 1)$ there exists a negative window $X^*$ with $f_B(X^*)_t < \tau$ and $\nabla_X f_B(X^*)_t \neq 0$. Then there exists a finite $\varepsilon^* < \infty$ such that for every $\varepsilon \ge \varepsilon^*$, every $\alpha \in (0, 1)$, and every cal-selected threshold $\tau_t(\alpha)$, an adversary $\mathcal{A}$ with the budget in (ii$^-$) achieves
\begin{equation}
\Pr\!\left[f_B(\tilde X_{\mathrm{cnf}})_t \ge \tau_t(\alpha) \;\bigm|\; Y_{\mathrm{cnf}} \ne t\right] \;\to\; 1
\label{eq:necessity-rate}
\end{equation}
as $\varepsilon \to \varepsilon^*$. Consequently, no FAR upper bound of the form $\Pr[\text{unauthorized confirmed route}] \le \alpha + g(n_c, \delta)$ with $g$ independent of $\mathcal{A}$ can hold under B-strict$^-$.
\end{proposition}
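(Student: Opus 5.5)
The plan is to build the adversary explicitly from a per-window feasibility condition, in the spirit of the C3 feasibility argument in Proposition~\ref{lem:observability-feasibility}, and then convert pointwise success into a probability-one statement over the negative confirmation distribution. The target is the one-class version of \eqref{eq:feasibility}: we want a perturbation $\delta_{\mathrm{cnf}}$ with $\norm{\delta_{\mathrm{cnf}}}_\infty\le\varepsilon$ such that $f_B(X+\delta_{\mathrm{cnf}})_t\ge\tau_t(\alpha)$. Since every cal-selected threshold lies in $(0,1)$ and the scores are softmax outputs, it suffices to drive the score to a level exceeding $\sup_\alpha\tau_t(\alpha)$. On a finite calibration set this supremum equals the largest negative calibration score $\tau_{\max}<1$, so a single target level handles all $\alpha$ simultaneously. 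This uniformity is why $\varepsilon^*$ can be chosen independent of $\alpha$.

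The key steps are as follows. \emph{Step 1 (local ascent direction).} At any window where $\nabla_X f_B(X)_t\ne 0$, take $u=\mathrm{sign}(\nabla_X f_B(X)_t)$, so that $\norm{u}_\infty\le 1$. A first-order expansion with a remainder bound $\rho(\epsilon,u)$, as in \eqref{eq:remainder-bound}, then gives a strict increase $\epsilon\norm{\nabla_X f_B(X)_t}_1-\rho(\epsilon,u)>0$ for small $\epsilon$; the hypothesis on $X^*$ guarantees that such a point exists below every threshold. \emph{Step 2 (path to the threshold).} Concatenate these steps as projected gradient ascent, or more cleanly, argue the existence of a continuous path in input space from $X$ to a window $X^+$ on which $f_B(X^+)_t>\tau_{\max}$. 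Such an $X^+$ exists because the confirmer is non-degenerate, for instance because some positive-class training window scores near $1$. Set $\varepsilon(X)=\norm{X^+-X}_\infty$. \emph{Step 3 (uniform budget).} Define $\varepsilon^*$ as an essential supremum, or a high quantile, of $\varepsilon(X)$ over the negative test support, and argue that it is finite by compactness or boundedness of the standardized support: the raw windows are clipped, so $\varepsilon(X)\le\operatorname{diam}_\infty$ of the support together with the chosen $X^+$. For $\varepsilon\ge\varepsilon^*$, the attacker succeeds on every negative window, so the conditional exceedance probability in \eqref{eq:necessity-rate} equals $1$. The quantile version delivers the stated ``$\to 1$'' as the quantile level is sent toward the supremum.

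The consequence follows by contradiction. Any bound $\alpha+g(n_c,\delta)$ with $g$ independent of $\mathcal{A}$ can be made strictly less than $1$ by choosing $\alpha$ small and $n_c$ large (assuming $g$ vanishes in that regime, as in Proposition~\ref{thm:conformal-far}). Under B-strict$^-$ with $\varepsilon\ge\varepsilon^*$, however, the confirmed-route probability equals the exceedance probability, which is $1$, because the source window can also be set to elicit the target under the same budget. This is the reason B-strict$^-$ breaks Corollary~\ref{cor:far-route}: exchangeability fails once $\tilde X_{\mathrm{cnf}}$ depends on $\mathcal{A}$.

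The main obstacle is Step 3. The hypothesis of the proposition only provides one nonzero-gradient negative window per threshold, not a gradient-ascent path that avoids stationary points everywhere. Local first-order arguments therefore do not by themselves give a uniform finite $\varepsilon^*$. My first attempt would bypass gradients entirely and use the crude global bound from Step 2: an $\ell_\infty$ ball of radius equal to the support diameter contains $X^+$. Differentiability would be used only to make the PGD relaxation meaningful, and to note that the stated limit is attained rather than merely approached.
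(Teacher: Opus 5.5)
Your proposal is correct, at least at the paper's level of rigor. It shares the paper's skeleton: a nonempty superlevel set $\{f_B(\cdot)_t\ge\tau\}$ gives a finite per-window budget, a supremum over test negatives gives one $\varepsilon^*$, and an exceedance rate of $1$ contradicts any attacker-independent bound once $\alpha<1-g(n_c,\delta)$. For that last step you only need this range of $\alpha$ to be nonempty, not that $g$ vanishes.

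The two proofs differ in how the per-window budget is realized.
\begin{itemize}
\item \textbf{The paper's route.} It fixes $\alpha$, defines the sharp budget $\varepsilon^*(X^*;\tau)$ as an infimum over closed $\ell_\infty$ balls meeting the superlevel set, and asserts that PGD at that budget lands in it. This ties the statement to the attack actually run in the C/C$'$ cells. However, PGD is a local method with no such guarantee. Also, the resulting $\varepsilon^*$ depends on $\tau_t(\alpha)$, although the statement quantifies a single $\varepsilon^*$ over all $\alpha$.
\item \textbf{Your route.} You target $\tau_{\max}$, which dominates every cal-selected threshold in $(0,1)$, fix one $X^+$ above it, and let the adversary move there. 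This gives an adversary that provably succeeds and an $\alpha$-uniform budget. It also shows that the differentiability and nonzero-gradient hypotheses do no work: attainment comes from the fixed target point, not from differentiability as your last sentence suggests. The price is a looser budget, $\norm{X^+-X}_\infty\ge\varepsilon^*(X;\tau_{\max})$.
\item Unlike the paper, you also note that the route event requires driving the source window to the target.
\end{itemize}

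Three small repairs:
\begin{itemize}
\item Nonemptiness needs no extra non-degeneracy assumption. The calibration negative attaining $\tau_{\max}$ already lies in $\{f_B(\cdot)_t\ge\tau_{\max}\}$. You need a window scoring at least $1/2$ only if you also impose the argmax condition of Definition~\ref{def:strict-calibration}.
\item Your uniformity claim holds only over thresholds in $(0,1)$. For $\alpha<1/(m_t+1)$ the index $\lceil(1-\alpha)(m_t+1)\rceil$ exceeds $m_t$, the threshold is conventionally $+\infty$, and no budget helps.
\item Finiteness of the supremum does not follow from the setup. The paper's $\ell_\infty$ clipping acts on the perturbation, not on the windows, so ``the raw windows are clipped'' is not available. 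Either state bounded support as an explicit assumption, or read the negative test distribution as the finite empirical test set, where the supremum is a maximum. The paper's $\sup_{X^*}$ step silently needs the same assumption.
\end{itemize}
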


\begin{proof}
Fix $\alpha$, $\tau = \tau_t(\alpha)$. Pick a negative window $X^*$ with $s := f_B(X^*)_t < \tau$ and gradient $g^* = \nabla_X f_B(X^*)_t \ne 0$; such $X^*$ exists by assumption. The attacker constructs $\tilde X$ by $\ell_\infty$ PGD inside the ball $B_\infty(X^*, \varepsilon)$, ascending $f_B(\cdot)_t$. Continuity of $f_B$ and nonemptiness of the superlevel set $\{X : f_B(X)_t \ge \tau\}$ (it contains positive-class examples) imply
\begin{equation}
\varepsilon^*(X^*; \tau) := \inf\{\varepsilon : \exists \tilde X \in \overline{B_\infty(X^*, \varepsilon)} \text{ with } f_B(\tilde X)_t \ge \tau\}
\label{eq:eps-star}
\end{equation}
is finite. PGD with budget $\varepsilon \ge \varepsilon^*(X^*; \tau)$ returns $\tilde X$ with $f_B(\tilde X)_t \ge \tau$. Taking $\varepsilon^* = \sup_{X^*} \varepsilon^*(X^*; \tau)$ over the test negative distribution, the adversary's success rate approaches $1$ as $\varepsilon \to \varepsilon^*$. If a uniform bound $\alpha + g(n_c, \delta)$ existed for any $\alpha < 1 - g(n_c, \delta)$, the adversary's empirical rate would contradict it. No such uniform $g$ exists.
\end{proof}

\begin{corollary}[Pairing of Propositions~\ref{thm:conformal-far} and \ref{thm:necessity}]
\label{cor:assumption-boundary}
Under threat model B-strict, condition (i) is sufficient for the bound when (ii) holds, and (ii) is necessary in the sense that no uniform attacker-independent FAR bound can replace it. The pair characterizes the assumption boundary of split-conformal confirmation FAR control on the BCI confirmation channel.
\end{corollary}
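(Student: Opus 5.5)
The corollary combines two earlier results, so the plan is to cite each for one direction and then state the boundary they jointly mark.

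\emph{Sufficiency.} Assume threat model B-strict, so both (i) acquisition isolation and (ii) non-controllability hold. Apply Corollary~\ref{cor:far-route} directly. Under (ii), any source-window attacker leaves $X_{\mathrm{cnf}}$ unchanged, so the future negative score $S^{(t)}_{m_t+1}$ has a law that does not depend on $\mathcal{A}$. Under (i), that score is exchangeable with the strict held-out calibration negatives of Definition~\ref{def:strict-calibration}. Proposition~\ref{thm:conformal-far} then gives the bound $\lfloor\alpha(m_t+1)\rfloor/(m_t+1)\le\alpha$. The unauthorized confirmed-route event is contained in the threshold-exceedance event, so the same bound holds for it. This shows that (i) suffices once (ii) is in force. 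I would also note why (i) is still needed alongside (ii): the leaky-calibration remark shows exchangeability can fail even with (ii) intact, which is why (i) is listed as a separate assumption.

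\emph{Necessity.} Drop (ii) and replace it with (ii$^-$), giving B-strict$^-$. Invoke Proposition~\ref{thm:necessity} under its regularity hypotheses: $f_B$ is differentiable, and for each threshold there is a negative window below it with nonzero gradient. That proposition gives a finite $\varepsilon^*$ such that the attacked threshold-exceedance probability tends to $1$. I would pass from exceedance to the route event by noting that the same ascent can be run on the margin $f_B(\cdot)_t-\max_{y\neq t}f_B(\cdot)_y$ as well. Then both the argmax condition and the threshold condition of Definition~\ref{def:strict-calibration} are met, so the unauthorized confirmed-route probability also approaches $1$. Now suppose some bound $\alpha+g(n_c,\delta)$ with $g$ independent of $\mathcal{A}$ held. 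Choosing $\alpha<1-g(n_c,\delta)$ contradicts the adversary's success rate approaching $1$. So no attacker-independent bound can replace (ii), which is the claimed sense of necessity.

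Putting the two directions together: Corollary~\ref{cor:far-route} shows the guarantee holds on the B-strict side, and Proposition~\ref{thm:necessity} shows it fails once (ii) is removed. That is the assumption boundary the corollary asserts.

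The main obstacle is in the necessity direction. Proposition~\ref{thm:necessity} states success only ``as $\varepsilon\to\varepsilon^*$'' and takes a supremum over test negatives. Turning that into a clean contradiction requires the supremum to be finite, or at least the attack to succeed on a set of negatives whose probability approaches $1$. I would first handle this by restricting to a compact support of the test distribution, so continuity gives a uniform finite $\varepsilon^*$. I would also state explicitly that the argmax condition of Definition~\ref{def:strict-calibration} is met, using the margin-ascent argument above, rather than only the threshold condition.
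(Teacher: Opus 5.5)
Your proposal is correct and follows the argument the paper intends. The paper gives no separate proof; the corollary is simply the conjunction of Corollary~\ref{cor:far-route}, which via Proposition~\ref{thm:conformal-far} gives the $\alpha$ bound under (i) and (ii), and Proposition~\ref{thm:necessity}, which rules out any attacker-independent bound $\alpha+g(n_c,\delta)$ under B-strict$^-$. That is exactly how you structure it, and your two refinements are sound: you require the argmax condition as well as threshold exceedance, and you get a uniform finite $\varepsilon^*$ from bounded support, since the $\ell_\infty$ distance to a nonempty acceptance set is then bounded. Both patch points that the paper's proof of Proposition~\ref{thm:necessity} glosses over.
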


The empirical $\varepsilon^*$ for the TinyEEGNet and EEGNetV4 confirmers used in this paper is upper-bounded by the audited budget $\varepsilon = 0.08$: at this budget the C/C$'$ archetypes in \S\ref{sec:phase5} reach FAR $0.998/0.999$, consistent with Proposition~\ref{thm:necessity}. Proposition~\ref{thm:necessity} therefore predicts a quantitative behavior (rate $\to 1$ at finite $\varepsilon$) rather than only an asymptotic one, which is what the C/C$'$ cells report.

\begin{table}[t]
\centering
\caption{Aggregate-to-denominator repairs. Each invalid aggregate omits the variable that identifies the class-specific route-safety claim.}
\label{tab:denominator-map}
\scriptsize
\setlength{\tabcolsep}{1.2pt}
\begin{tabular}{@{}p{.23\linewidth}p{.22\linewidth}p{.27\linewidth}p{.18\linewidth}@{}}
\toprule
Invalid aggregate & Missing observable & Correct denominator & Supporting experiment \\
\midrule
C2 EEG flag rate & Context provenance and route & Clean/untrusted routed episodes & Exp2/5/7/8 \\
Clean agreement rate & Attacked secondary decision & Source-not-target attacked cases & Exp1/6/8/10 \\
C3 success alone & Route and confirmation policy & Successful C3 cases with route log & Exp2/5/7/8/10 \\
Route episode count & Seed/case/context hierarchy & Seed-level numerators and gates & Tables~\ref{tab:evidence-spine}, \ref{tab:t10-confirmatory-ci} \\
Confirmation route rate & Oracle/user-confirmation status & Execution-policy upper-bound rows & Exp5/8/9; Tables~\ref{tab:v6-defense}, \ref{tab:t10-coverage-risk} \\
\bottomrule
\end{tabular}
\end{table}

\section{Validation Protocol}
\label{sec:protocol}

Each validation cell reports decoder usability separately from attack bypass. Table~\ref{tab:primary-endpoints} fixes the primary endpoint hierarchy before the experiment narrative: Exp8 native command-control and Exp9 non-oracle confirmation carry the main route-safety claims; Exp10 and the matched-defense cell are stress tests on the same denominators; transcript and live-router bridges are support artifacts only. The required C2/C3 metrics are clean primary accuracy, clean secondary accuracy, clean disagreement burden, C2 attack-attributable EEG flag uplift, C2 context-provenance block rate, C3 attack success, C3 conditional bypass, C3 unauthorized conditional bypass, rule-specific target-route rates, benign block rates, confirmation status, and tool-attempt rate. C1 and C4 are logged as supporting axes rather than claim-bearing empirical classes. Seeds that fail decoder, attack, or rule gates are reported rather than silently excluded.

\begin{table*}[t]
\centering
\caption{Claim-bearing audit protocol. Rows name the statistical unit and route predicate used for each load-bearing claim; route episodes are coverage artifacts, not independent samples.}
\label{tab:audit-protocol}
\scriptsize
\setlength{\tabcolsep}{2pt}
\begin{tabular}{@{}p{.12\linewidth}p{.16\linewidth}p{.18\linewidth}p{.18\linewidth}p{.25\linewidth}@{}}
\toprule
Claim & Statistical unit & Denominator & Route predicate & Gate or boundary \\
\midrule
Decoder usability & seed & held-out EEG events & clean primary/secondary accuracy and disagreement & $\ge7$ seeds with acc. $\ge.60$, disagr. $\le.40$ \\
C2 route safety & seed $\times$ context set & clean/untrusted routed episodes & requested target route under benign vs. untrusted provenance & provenance block $\le.05$ with benign utility reported \\
C3 agreement risk & seed $\times$ source-not-target case & attacked cases whose clean source is not target & primary target, secondary target, active policy executes target & report $\Prb(A)$, $\Prb(B\mid A)$, $\Prb(P\mid A,B)$ when observable \\
Confirmation & seed $\times$ independent confirmation window & proxy-confirmed or upper-bound policy episodes & execution only if confirmation condition passes & report clean coverage and C3 residual jointly \\
Stress/defense & seed & same C3 source-not-target cases & constrained or query-limited attack routes through same policy & residual route rate under matched clean burden \\
Transcript bridge & seed-level route suite & decoded-label cases $\times$ public contexts & parse-valid, policy-compliant harmless stub route & transcript coverage only; no extra statistical weight \\
\bottomrule
\end{tabular}
\end{table*}

\paragraph{Decomposition-to-defense predictions.}
Theorem~\ref{thm:c3-risk-decomposition} separates four route-safety factors. Primary robustness or attack constraints lower $p_1$ and reduce every downstream C3 route; this is tested by Exp10 preprocessing constraints and the Exp3 band-limited and query rows. Heterogeneous verifiers or ensembles change $\Prb(B\mid A)$ only when attacked secondary errors decorrelate; this is tested by the Exp3 calibrated PGD pair and ensemble residuals. Confirmation or execution policies change $\alpha$ even when both decoders agree, as Exp9's confirmation-proxy frontier and the oracle upper bound show. Finally, the source-not-target denominator $D_c$ prevents clean agreement or route episodes from certifying C3 by themselves; this is tested by Exp6 ($800/800$ source-not-target cases under agreement) and codified in Table~\ref{tab:audit-protocol} and Table~\ref{tab:primary-endpoints}.

\begin{table*}[t]
\centering
\caption{Primary endpoint hierarchy. ``Primary'' rows are the main claim-bearing endpoints used to decide the paper's C2/C3 audit claims; stress rows test robustness boundaries on the same denominators; support rows check route-vocabulary and transcript coverage without adding statistical weight.}
\label{tab:primary-endpoints}
\scriptsize
\setlength{\tabcolsep}{2pt}
\begin{tabular}{@{}p{.10\linewidth}p{.18\linewidth}p{.20\linewidth}p{.17\linewidth}p{.24\linewidth}@{}}
\toprule
Role & Endpoint & Statistical unit / denominator & Decision rule & Reported value \\
\midrule
Primary & Exp8 native C2 route block & 10 seeds; untrusted native-command route episodes & provenance route $\le.05$ & $0.000$; decoder gates $10/10$ \\
Primary & Exp8 native C3 agreement route & 10 seeds; 80 source-window cases/seed & agreement-plus-provenance route reported & $1.000$; clean utility $.806$ \\
Primary & Exp9 non-oracle confirmation proxy & 10 seeds; independent confirmation windows & report C3 residual and clean coverage jointly & $.178$ at clean $.573$; $.091$ at clean $.458$ \\
Stress & Exp10 preprocessing-shaped C3 & 10 seeds; same native source-not-target route predicate & retain/fail under named constraints & smooth9 $.883$; band 8--30 $.995$; smooth9+band $.156$ \\
Stress & Matched non-oracle defenses & 10 seeds; matched clean-burden C3 cases & residual risk under calibrated defenses & white-box $.590$--$.693$; score-query $.067$--$.093$ \\
Support & Public transcript/live-router bridge & seed-level decoded-label route suite & parse/policy consistency and route-vocabulary coverage & coverage only; no extra statistical unit \\
\bottomrule
\end{tabular}
\end{table*}

For C2, attack-attributable uplift is the C2 EEG-side flag rate minus the clean flag rate over the same case set; a successful context-provenance defense blocks the target route while leaving benign context unblocked. For C3, attack success is the fraction of attack cases whose primary decoder reaches the target; conditional bypass is the fraction of successful C3 cases where the secondary decoder also agrees; unauthorized conditional bypass uses only source cases whose clean label is not the target. Unless otherwise stated, C2 route rates are averaged over 20 untrusted contexts per seed, benign block rates over 20 trusted contexts per seed, C3 rates over 120 attack cases per seed, and unauthorized C3 rates over the 60 source-not-target cases per seed. Tables report seed means and standard deviations with failed decoder gates kept in the denominator; route-episode counts are transcript-coverage counts, not independent statistical units. Exact seed-level numerators, denominators, and gates are part of the released result JSON. The gate-selected A:C rows use the predefined decoder gate and then report the strongest passing perturbation setting, while Table~\ref{tab:c3-strength-sweep} exposes the weaker settings so that the C3 claim is not hidden behind a single best configuration. The fixed route/observability cells compare non-confirmation mediation layers against explicit confirmation. Confirmation is not treated as a learned defense. We report two distinct confirmation rows: an oracle execution-policy upper bound that withholds high-impact execution unless a clean-label confirmation variable grants the route, and a non-oracle proxy that requires an independent EEG confirmation window decoded by a separately trained model above a calibration-split threshold. The final matched-defense cell adds a multi-step PGD adversarial verifier, robust ensemble gates, clean-burden calibration, band-limited attacks, and score-query attacks so that the remaining residual C3 routes are not an artifact of weak or unfairly burdened baselines.

\paragraph{Rule-level observability.}
Each non-oracle rule observes a different subset of variables. Primary-only and confidence-gate rules fail on C2 (no provenance variable), and confidence gating only partially flags C1 perturbations. Heterogeneous agreement flags some C1 cases but is bypassed by C3 because both decoders are jointly driven. Context provenance directly blocks C2 routes and is irrelevant for C3. Agreement plus provenance blocks C2 but is still bypassed by C3 under the conditions of Proposition~\ref{lem:observability-feasibility}. The non-oracle confirmation proxy of Exp9 introduces an independent EEG signal that lowers C3 routing without requiring a clean label, and the oracle confirmation rule appears only as a separately marked upper-bound execution policy.

\paragraph{What the audit proves and does not prove.}
The audit proves narrow, testable claims rather than deployment-level guarantees. C2 provenance proves that route logs must include context provenance and outcome, not the prevalence of malicious contexts. C3 agreement proves that the attacked secondary decision and policy are required denominators, not human intent or consent. Native command-control proves that the audit holds on EEGMMI left/right command labels, not deployed cursor-controller safety. Perturbation stress proves that route risk survives some preprocessing-shaped constraints, not electrode-level physical feasibility. Harmless tool stubs prove route-policy observability, not real high-impact tool execution. Transcript bridges prove that public route predicates can be inspected, not that they constitute independent statistical evidence.

\paragraph{Reviewer-facing claim boundary.}
The experiments are offline authorization audits. They do not estimate deployed BCI prevalence, physical injection feasibility, human consent, or real high-impact tool execution. Their claim is narrower and testable: for a fixed BCI-to-tool routing interface, a route-safety certificate must log the variables in Table~\ref{tab:audit-protocol}; otherwise C2 and C3 can be counted with non-identifying denominators. The empirical cells then ask whether the resulting falsification protocol survives native command labels, preprocessing-shaped perturbations, independent confirmation proxies, public harmless routing, and matched non-oracle defenses.

\section{Experiments}
\label{sec:experiments}

\subsection{Routing protocol, command taxonomy, and native command-control}
\label{sec:routing}

The main routing evidence is the native command-control audit in Exp8 and the independent confirmation proxy in Exp9. The decoded binary command is treated as a confirmation-relevant interface intent and maps only to harmless cursor-left/cursor-right stubs in the primary native cell. A context-provenance rule blocks target-route attempts when the route is induced only by external context. The protocol tests the authorization boundary directly: whether EEG-side agreement and context provenance observe the right attack class, not whether any real financial, medical, file-system, or communication action is executed. Exp5/Exp7 and the command-taxonomy transcript gate are support cells: they broaden the harmless route vocabulary and check public router consistency, but they do not replace the native-command endpoint hierarchy in Table~\ref{tab:primary-endpoints}.

To reduce reliance on that high-impact surrogate, Exp8 instantiates the same audit on native EEGMMI left/right command-control semantics. It uses R03/R04/R07/R08/R11/R12, where T1/T2 denote left/right fist execution or imagery, and maps decoded labels only to harmless cursor-left/cursor-right stubs. Across 60 subjects, 5{,}400 native command events, 10 seeds, and 80 cases per seed, A:C decoder gates pass in 10/10 seeds with clean accuracies $0.779/0.788$ and disagreement $0.174$. C2 untrusted overlay contexts route the requested direction at $1.000$ without provenance and $0.000$ with provenance. Shared-raw C3 command flips reach the target direction under agreement plus provenance at $1.000$, while confirmation plus provenance routes them at $0.000$; clean non-abstain utility is $0.806$. This is still an offline audit, not a deployed cursor controller, but it replaces the financial/access surrogate with native BCI command-control labels.

Exp9 separates the oracle confirmation upper bound from a non-oracle confirmation proxy. It keeps the same native-command A:C source-window attack, but route execution also requires an independent clean EEG confirmation window decoded by TinyEEGNetB above a calibration-split threshold chosen to control per-label false accepts. The confirmation window is paired from the same subject and same true source command when available, so an unauthorized target route requires a confirmation-decoder false accept rather than access to the clean source label. With a $0.05$ calibration false-accept target, the proxy passes all decoder and case gates; confirmation-decoder accuracy is $0.798$, calibration false accept rates are $0.046/0.047$, and calibration true accept rates are $0.893/0.896$. It reduces native C3 target routing from $1.000$ under agreement to $0.178$ with pooled Wilson 95\% CI $[0.153,0.206]$, while clean proxy-confirmed routing is $0.573$. A stricter $0.01$ false-accept target passes the predefined Exp9 gate and further reduces C3 routing to $0.091$ with Wilson CI $[0.073,0.113]$, at clean proxy-confirmed routing $0.458$. Thus independent confirmation is not an oracle-equivalent learned defense in this audit: it exposes a measurable risk--coverage frontier.

The validation hierarchy separates controls, primary endpoints, stress tests, and support artifacts. Synthetic, embedding, and deterministic-routing cells check metric plumbing and directional sanity. The paper-facing routing endpoints are Exp8 and Exp9: native left/right command-control with harmless cursor stubs, followed by an independent EEG confirmation proxy that turns the oracle upper bound into a measurable risk--coverage frontier. The decoded-label public tool-agent cell, command-taxonomy transcript gate, and live-router bridge are support artifacts only. They send no raw EEG, embeddings, subject identifiers, or paths; they check that the same C2/C3/confirmation route predicates can be represented across a broader harmless route vocabulary and a public router interface. Their 144{,}000 route episodes are transcript coverage, not independent statistical samples, and they are not used to decide the main claim.

\begin{table}[t]
\centering
\caption{Native EEGMMI command-control audit on left/right fist execution and imagery runs. Routes are harmless cursor stubs; values are seed means over 10 seeds.}
\label{tab:exp8-native-command}
\scriptsize
\setlength{\tabcolsep}{2pt}
\begin{tabular}{@{}lccc@{}}
\toprule
Cell & Gate & Route result & Boundary answered \\
\midrule
Decoder & 10/10 & acc. $.779/.788$; disagr. $.174$ & native commands decodable \\
C2 overlay & pass & request route $1.000 \to 0.000$ & provenance blocks context \\
C3 flip & pass & agreement route $1.000$ & agreement not intent \\
Confirmation & pass & C3 route $0.000$ & execution upper bound \\
\bottomrule
\end{tabular}
\end{table}

\begin{table}[t]
\centering
\caption{Seed-level decoder statistics for Exp8 native command-control. The four route columns (C3-dual, C2-prov., C3-agree, C3-conf.) are deterministic in this cell and equal $(1.000,0.000,1.000,0.000)$ for every seed; we omit them here to surface the across-seed decoder variance, and list those route values once in the caption rather than repeating ten identical rows.}
\label{tab:exp8-seed-level}
\scriptsize
\setlength{\tabcolsep}{4pt}
\begin{tabular}{@{}lccc@{}}
\toprule
Seed & A acc. & C acc. & A/C disagr. \\
\midrule
0 & .778 & .785 & .157 \\
1 & .766 & .789 & .185 \\
2 & .778 & .795 & .177 \\
3 & .772 & .789 & .166 \\
4 & .764 & .797 & .178 \\
5 & .789 & .791 & .169 \\
6 & .784 & .780 & .175 \\
7 & .780 & .785 & .172 \\
8 & .789 & .789 & .184 \\
9 & .788 & .777 & .175 \\
\bottomrule
\end{tabular}
\end{table}

\begin{table}[t]
\centering
\caption{Exp9 non-oracle confirmation proxy on native command-control. The source window is attacked; an independent clean confirmation window is decoded by TinyEEGNetB with calibration-split thresholds.}
\label{tab:exp9-confirmation-proxy}
\scriptsize
\setlength{\tabcolsep}{2pt}
\begin{tabular}{@{}lcccc@{}}
\toprule
Cal. FA target & Cal. FA & Cal. TA & C3 proxy route & Clean proxy route \\
\midrule
$.05$ & $.046/.047$ & $.893/.896$ & $.178{\pm}.037$ & $.573{\pm}.053$ \\
$.01$ & $.007/.008$ & $.679/.726$ & $.091{\pm}.031$ & $.458{\pm}.053$ \\
\bottomrule
\end{tabular}
\end{table}

\begin{table}[t]
\centering
\caption{Inspectable route-taxonomy slice used by the public transcript gate. All entries are public strings and harmless stubs; high-impact stubs record that a route would have been attempted but execute nothing.}
\label{tab:route-taxonomy-slice}
\scriptsize
\setlength{\tabcolsep}{1.5pt}
\resizebox{\linewidth}{!}{
\begin{tabular}{@{}llll@{}}
\toprule
Family & Source & Route & Policy behavior \\
\midrule
Status/menu & trusted UI & status/menu/calib. & low-risk route allowed \\
Message & trusted UI & draft-only & local buffer only \\
Finance/access & untrusted ctx. & transfer/unlock/alert & provenance abstains \\
Mixed & untrusted ctx. & msg. $+$ transfer & provenance abstains \\
Trusted high-impact & trusted UI & transfer/alert prep & confirmation required \\
C3 target & trusted ctx. & decoded transfer & agreement routes unless confirm. required \\
\bottomrule
\end{tabular}}
\end{table}

\subsection{Real EEGMMI command-label validation}
\label{sec:round5}

The real-label cell uses the EEGMMI R03 event cache with 60 subjects, 900 T1/T2 events, 16 channels, and 512 samples per event. T1/T2 are used as real binary command labels in the surrogate routing protocol; the target label maps only to a harmless non-executing route. The main cell trains two independent TinyEEGNet-style raw EEG decoders and evaluates shared raw-input PGD attacks.

Real-EEG inputs are standardized using training-window channel statistics and then attacked in that standardized raw-input space. Across the 30 A:C split/seed runs, the training-channel standard deviation in the raw cache is $0.9725\pm0.0078$; therefore $\epsilon=.08$ corresponds to a mean raw-cache amplitude of $0.0778$, about $8.0\%$ of the global raw standard deviation and $4.1\%$ of the 95th percentile absolute amplitude. This calibration makes the perturbation scale auditable, but it is still an offline standardized-tensor budget rather than a claim of hardware-injection feasibility. TinyEEGNetA, B, and C differ in temporal kernel widths, channel counts, pooling, and dropout, so A:C is not a duplicate-head agreement test. C1 uses targeted PGD against the primary decoder only. C3 uses the same raw perturbation for both decoders and minimizes the sum of primary and secondary target cross-entropies with $\ell_\infty$ clipping; this is the executable relaxation of Proposition~\ref{lem:observability-feasibility}. All reported split/seed runs keep failed seeds in the denominator and report decoder gates before attack metrics.

\paragraph{Level-1 gate.}
The original A:B real-label cell passes the gate: 10/10 usable seeds, balanced accuracy $0.7542/0.7790$, clean disagreement $0.1520$, C2 EEG flag uplift $0.0000$, C2 target-route attempts $1.0000/0.0000$ without/with provenance, and C3 success/bypass/unauthorized bypass $0.9983/0.9933/0.9900$. We treat this as the calibrated real-label gate, then repeat the shared-raw attack with TinyEEGNetC as a different secondary decoder across three split protocols. The replication rules out the main toy-pipeline and single-decoder-pair objections: C3 bypass is not a subject-parity, embedding-space, or one-pair artifact.

\begin{table}[t]
\centering
\caption{Raw-CNN Level-2 A:C replication across split protocols at the gate-selected $\epsilon=0.08$, 40-step shared-raw PGD setting. Values are mean $\pm$ standard deviation.}
\label{tab:round5-level2-ac}
\footnotesize
\setlength{\tabcolsep}{2pt}
\begin{tabular}{@{}lccc@{}}
\toprule
Split & P/S acc. & Disagr. & C3 succ./bypass \\
\midrule
Within & $.754{\pm}.028$ / $.766{\pm}.029$ & $.149{\pm}.032$ & $.997{\pm}.008$ / $1.000$ \\
Provided & $.737{\pm}.022$ / $.752{\pm}.019$ & $.183{\pm}.008$ & $1.000$ / $1.000$ \\
Holdout & $.756{\pm}.030$ / $.763{\pm}.023$ & $.159{\pm}.024$ & $.999{\pm}.003$ / $1.000$ \\
\bottomrule
\end{tabular}
\end{table}

\begin{table}[t]
\centering
\caption{Attack-strength sweep from the same A:C result files, aggregating 30 split/seed runs per $\epsilon$ at 40 steps.}
\label{tab:c3-strength-sweep}
\footnotesize
\setlength{\tabcolsep}{3pt}
\begin{tabular}{@{}lccc@{}}
\toprule
$\epsilon$ & Runs/gates & C3 success & C3 unauth. bypass \\
\midrule
.02 & $30/30$ & $.805{\pm}.044$ & $.893{\pm}.066$ \\
.04 & $30/30$ & $.951{\pm}.025$ & $.972{\pm}.028$ \\
.08 & $30/30$ & $.999{\pm}.005$ & $1.000{\pm}.000$ \\
\bottomrule
\end{tabular}
\end{table}

For the gate-selected A:C configuration ($\epsilon=0.08$, 40 PGD steps), all 30 split/seed runs are usable. The aggregate primary/secondary balanced accuracies across the three split protocols are $0.749/0.760$, and the mean clean disagreement is $0.164$. Across within-subject, provided-split, and subject-holdout protocols, C2 attack-attributable EEG flag uplift is $0.000\pm0.000$; C2 reaches the target route without context provenance at $1.000\pm0.000$ and is blocked with provenance at $0.000\pm0.000$. C3 unauthorized conditional bypass is $1.000\pm0.000$ in each split, with the minimum usable-seed conditional bypass equal to $1.000$ in all three split protocols. Thus the cross-decoder and cross-split replication preserves both parts of the boundary claim: EEG-side monitoring remains blind to C2 context injection, and agreement remains satisfiable under shared raw-input C3 optimization.

A separate lemma-to-experiment bridge repeats the provided-split A:C cell with explicit margin diagnostics. Decoder gates pass in 10/10 seeds; at $\epsilon=0.08$ and 40 PGD steps, C3 attack success, conditional bypass, and unauthorized conditional bypass are all $1.000$. The joint-min margin gain is $27.487$ on average, and every seed has positive joint margin rate and shared-descent-both rate $1.000$. This bridge is the executable falsification cell for Proposition~\ref{lem:observability-feasibility}: the shared raw-input feasible region is observed in the same real-label setting used by the routing experiments.

To address the clean-utility objection directly, Exp6 restricts evaluation to test events where both A and C are clean-correct, agree with each other, and the true source label is not the target. This subset is not a deployment prevalence estimate; it is a conditioned audit asking whether agreement bypass remains possible on cases that the two-decoder system handled correctly before attack. Across 10 provided-split seeds, the audit selects 80 clean-correct source-not-target cases per seed from the 300-event test split (rate $0.267$ per seed), for 800 total cases. Shared-raw C3 routes $800/800$ of them under agreement, with pooled Wilson 95\% CI $[0.995,1.000]$; explicit confirmation routes $0/800$. The clean subset has mean decoder confidences $0.961/0.960$, so this result is not driven by low-confidence clean disagreements.

\begin{table}[t]
\centering
\caption{Paper-facing evidence spine for the final C2/C3 claim. All cells use EEGMMI real T1/T2 labels with harmless stubs; all rows pass the predefined gate at $10/10$ seeds (Exp10 is reported as a stress cell on the same denominators).}
\label{tab:evidence-spine}
\scriptsize
\setlength{\tabcolsep}{1.5pt}
\begin{tabular}{@{}p{.10\linewidth}p{.45\linewidth}p{.36\linewidth}@{}}
\toprule
Cell & Main result & Objection answered \\
\midrule
Exp1 & C3 unauth.\ bypass $1.000$ & formal/toy artifact \\
Exp2 & C2 prov.\ $0.000$; C3 non-conf.\ $1.000$ & tool-routing semantics \\
Exp3 & white-box residual $.590$--$.693$ & weak baseline \\
Exp4 & attack-attrib.\ residual $\le.070$ & score/grad.\ access \\
Exp5 & 30 ctx.; 144k routes; $0/1/0$ & single stub/ctx. \\
Exp6 & $800/800$ high-utility C3 & low-utility error \\
Exp7 & 144k Opus 4.7; parse/policy $1.000$ & public LLM consistency \\
Exp8 & 5.4k native events; $0/1/0$ & surrogate command \\
Exp9 & native C3 route $.178/.091$ & oracle confirmation \\
Exp10 & smooth/band/ramp $.883/.995/1.000$ & tensor-only \\
\bottomrule
\end{tabular}
\end{table}

\begin{table}[t]
\centering
\caption{Statistical hierarchy for denominator-sensitive results. Route episodes are transcript-coverage units; seed-level rows are the statistical units used for means and intervals.}
\label{tab:unit-hierarchy}
\scriptsize
\setlength{\tabcolsep}{1.4pt}
\begin{tabular}{@{}p{.22\linewidth}p{.27\linewidth}p{.39\linewidth}@{}}
\toprule
Quantity & Unit & Used for \\
\midrule
Decoder gates & seed/run & claim-bearing usability filter \\
Attack rates & seed-level case sets & C3 success and bypass estimates \\
Route rates & seed $\times$ case $\times$ policy & C2/C3 route numerators \\
Route episodes & context expansion & transcript coverage, not independent CI units \\
Live calls & batched router requests & parse/policy compliance checks \\
Stored transcripts & sampled route examples & inspectability and artifact hygiene \\
\bottomrule
\end{tabular}
\end{table}

\subsection{Fixed routing, observability, and realism cell}
\label{sec:v6}

The fixed route-suite cell is a controlled observability test rather than an external-validity claim. It fixes a public 40-context route suite: 20 trusted benign contexts and 20 untrusted C2 contexts that request the target route. Benign examples include status, calibration, spelling-buffer, and no-op interface contexts; C2 examples are untrusted retrieved or webpage contexts that request the target route. The deterministic route rule is intentionally transparent: route by decoded intent under trusted benign context, route to the target when untrusted context requests it unless provenance is enabled, and abstain when provenance marks such context as untrusted. This makes the cell a unit test for denominators and rule observability. It evaluates the same real-label A:B:C raw decoders on the provided split over 10 seeds, 120 attack cases per seed, and six execution rules: primary-only, confidence gate ($\tau=.80$), heterogeneous agreement, context provenance, agreement plus provenance, and a confirmation upper-bound rule. No real tool is executed; the target route is still the harmless non-executing stub.

\begin{table}[t]
\centering
\caption{v6 fixed routing and observability cell on EEGMMI real T1/T2 labels, provided split, 10 seeds. C2 entries are target-route rates over 20 untrusted contexts; benign entries are target-route block rates over 20 trusted contexts. C3 entries are unauthorized target-route rates.}
\label{tab:v6-defense}
\scriptsize
\setlength{\tabcolsep}{1.5pt}
\resizebox{\linewidth}{!}{
\begin{tabular}{@{}lccc@{}}
\toprule
Rule & C2 route & Benign block & C3 whitebox unauth. \\
\midrule
Primary only & $1.000{\pm}.000$ & $0.000{\pm}.000$ & $1.000{\pm}.000$ \\
Confidence gate & $.834{\pm}.018$ & $.088{\pm}.024$ & $.998{\pm}.005$ \\
Heterogeneous agreement & $.817{\pm}.027$ & $.107{\pm}.012$ & $1.000{\pm}.000$ \\
Context provenance & $0.000{\pm}.000$ & $0.000{\pm}.000$ & $1.000{\pm}.000$ \\
Agreement + provenance & $0.000{\pm}.000$ & $.107{\pm}.012$ & $1.000{\pm}.000$ \\
Confirmation upper bound (exec.) & $0.000{\pm}.000$ & $.122{\pm}.025$ & $0.000{\pm}.000$ \\
\bottomrule
\end{tabular}}
\end{table}

\begin{table}[t]
\centering
\caption{Native command-control temporal/preprocessing stress. Routes are C3 target-direction agreement routes over 10 seeds and 80 cases/seed. Distortion is mean perturbation standard deviation divided by event standard deviation; SNR is in dB.}
\label{tab:exp10-temporal}
\scriptsize
\setlength{\tabcolsep}{2pt}
\begin{tabular}{@{}lccc@{}}
\toprule
Constraint & C3 route & Distortion & SNR \\
\midrule
Plain raw tensor & $1.000{\pm}.000$ & $.088$ & $21.29$ \\
Smooth, 9 samples & $.883{\pm}.044$ & $.059$ & $24.85$ \\
Smooth, 17 samples & $.696{\pm}.063$ & $.056$ & $25.23$ \\
Band 4--40 Hz & $1.000{\pm}.000$ & $.074$ & $22.90$ \\
Band 8--30 Hz & $.995{\pm}.006$ & $.067$ & $23.67$ \\
Causal ramp, 25\% & $1.000{\pm}.000$ & $.078$ & $22.41$ \\
Smooth9 $+$ band 8--30 & $.156{\pm}.034$ & $.004$ & $48.83$ \\
\bottomrule
\end{tabular}
\end{table}

\begin{table}[t]
\centering
\caption{Post-processing and temporal-shift survival for Exp10. Values are dual-decoder target agreement after applying the listed transformation to an already optimized C3 example.}
\label{tab:exp10-survival}
\scriptsize
\setlength{\tabcolsep}{2pt}
\begin{tabular}{@{}lccccc@{}}
\toprule
Attack & none & smooth9 & band 8--30 & shift $\pm4$ & shift $\pm8$ \\
\midrule
Plain & $1.000$ & $.926$ & $1.000$ & $.196/.183$ & $.238/.213$ \\
Smooth9 & $.883$ & $.898$ & $.358$ & $.759/.759$ & $.563/.586$ \\
Band 8--30 & $.995$ & $.539$ & $1.000$ & $.014/.021$ & $.095/.091$ \\
Smooth9 $+$ band & $.156$ & $.160$ & $.406$ & $.133/.140$ & $.124/.113$ \\
\bottomrule
\end{tabular}
\end{table}

\begin{table}[t]
\centering
\caption{Perturbation-realism diagnostics for Exp10. The audit reports both survival and failure under temporal/preprocessing transformations; high SNR does not imply physical feasibility. The shift rows ($\pm4/\pm8$) report the survival of an already-optimized perturbation under post-hoc temporal shift, so they share the distortion/SNR of the underlying optimized attack and we mark the cell n/a to avoid double-counting.}
\label{tab:exp10-realism}
\scriptsize
\setlength{\tabcolsep}{2pt}
\resizebox{\linewidth}{!}{
\begin{tabular}{@{}lccc@{}}
\toprule
Diagnostic & Survival & Distortion/SNR & Interpretation \\
\midrule
Smooth9 optimized & $.883$ & $.059$ / $24.85$ dB & low-pass-shaped attack remains high \\
Band 8--30 optimized & $.995$ & $.067$ / $23.67$ dB & motor-band constraint remains high \\
Causal ramp optimized & $1.000$ & $.078$ / $22.41$ dB & onset-constrained attack remains high \\
Smooth9 $+$ band optimized & $.156$ & $.004$ / $48.83$ dB & combined constraint is a failure boundary \\
Plain attack shifted $\pm4/\pm8$ & $.196/.183$; $.238/.213$ & n/a & timing alignment is fragile \\
Band attack shifted $\pm4/\pm8$ & $.014/.021$; $.095/.091$ & n/a & band-limited route is shift-sensitive \\
\bottomrule
\end{tabular}}
\end{table}

The C2 route rates are a validation sanity check for observability: confidence and heterogeneous agreement reduce some benign or low-confidence executions, but they do not observe context provenance. Provenance blocks C2 and is irrelevant for C3. The confirmation upper-bound rule is included only as an execution policy with access to the clean label: it blocks unauthorized C3 by using information unavailable to an automatic EEG-side monitor. Exp9 replaces that oracle variable with an independently decoded confirmation window and shows the expected tradeoff: target routing falls from $1.000$ to $.178$ or $.091$ as the calibration false-accept target tightens from $.05$ to $.01$, while clean proxy-confirmed routing falls from $.573$ to $.458$.

Exp10 addresses the tensor-only objection without claiming physical injection. It repeats the native command-control C3 audit under preprocessing-shaped perturbation families and post-hoc transformations, using the same source-not-target denominator and route predicate as Table~\ref{tab:audit-protocol}. The diagnostic target is not invisibility to a human or realizability through electrodes; it is whether the agreement bypass survives constraints that approximate common preprocessing assumptions. Smoothing, band-limiting, and causal ramp constraints retain substantial agreement-route rates: smooth9 $.883$ at distortion $.059$ and $24.85$ dB SNR, band 8--30 Hz $.995$ at distortion $.067$ and $23.67$ dB, and causal ramp $1.000$ at distortion $.078$ and $22.41$ dB. These rows are constraint-matched C3 cells: the attack is optimized inside the named preprocessing family rather than evaluated only as an unconstrained raw-tensor perturbation. Post-hoc smoothing of a plain attack retains $.926$ agreement, and post-hoc band-limiting retains $1.000$. The same cell also exposes a failure boundary: combining smoothing and 8--30 Hz band-limiting during optimization drops routes to $.156$ despite very high SNR ($48.83$ dB), and shifting a plain or band-limited perturbation by $\pm4$ or $\pm8$ samples sharply reduces agreement. Thus the perturbation evidence supports an offline audit-stress claim and a temporal-alignment caveat, not hardware-channel feasibility.

The command-taxonomy transcript gate tests whether this conclusion survives a less single-stub route vocabulary. It uses the same provided-split A:C decoders, 10 seeds, 80 selected cases per seed, 30 public contexts, and six policies, producing 144{,}000 local deterministic route episodes and 240 stored example transcripts. Agreement plus provenance blocks all C2 high-impact routes ($0.000$) but routes all unauthorized C3 high-impact cases ($1.000$); provenance plus confirmation routes unauthorized C3 high-impact cases at $0.000$. Clean non-abstain utility under agreement plus provenance is $0.889\pm0.045$ with seed minimum $0.8125$. This cell is not a deployed-agent benchmark, but it removes the narrower objection that the public route evidence depends on a single named high-impact stub or a single context template.

The live public-router bridge repeats the transcript gate with Claude Opus 4.7 as the router and the same harmless stub schema. Across 144{,}000 routing episodes and 480 live API calls, 1{,}200 sampled transcripts have zero parse failures, zero policy violations, and zero executed high-impact stubs. No request contains raw EEG, embeddings, subject identifiers, local paths, or checkpoints. The live router reproduces the C2/C3/confirmation pattern: C2 high-impact route rate is $1.000$ without provenance and $0.000$ with provenance; C3 high-impact route rate is $1.000$ under agreement plus provenance and $0.000$ under confirmation plus provenance. The deterministic-reference agreement rate is $0.9754$ and clean non-abstain utility is $0.846$. The full case-gated bridge uses 80 clean and 80 C3 decoded-label cases per seed; all decoder case and router behavior gates pass.

\subsection{Matched stronger-defense stress}
\label{sec:v7}

The final stress cell tests whether stronger non-oracle defenses close the C3 channel when their clean burden is explicitly matched. It instantiates the decomposition in Theorem~\ref{thm:c3-risk-decomposition}: primary robustness targets $\Prb(A)$, heterogeneous verification targets $\Prb(B\mid A)$, and confirmation changes $\Prb(P\mid A,B)$. We keep TinyEEGNetA as the primary decoder and compare standard agreement, FGSM and multi-step PGD adversarial verifiers, clean and PGD robust ensembles, and calibrated matched variants. The calibrated variants set confidence thresholds on the calibration split to match a target primary-target clean block rate of $0.12$, avoiding a misleading comparison where a defense wins only by refusing most clean target routes. The matched PGD pair and ensemble both pass the burden check, and the ensemble does not cross the predefined overburden boundary. Attacks are adaptive to the evaluated family: white-box pair PGD, white-box ensemble PGD, band-limited ensemble PGD, and score-query attacks with 100 and 200 queries.

\begin{table}[t]
\centering
\caption{Reproducibility artifacts for the validation protocol. All tools are harmless stubs; no live API call is required for the load-bearing cells. The optional live-router bridge sends only decoded labels, public context strings, policy flags, and harmless tool schemas.}
\label{tab:repro-artifacts}
\scriptsize
\setlength{\tabcolsep}{1.3pt}
\begin{tabular}{@{}p{.25\linewidth}p{.65\linewidth}@{}}
\toprule
Artifact & Released content \\
\midrule
Data/splits & EEGMMI R03 cache metadata; native R03/04/07/08/11/12 cache; train/cal/test and holdout indices \\
Preprocess/models & standardization constants; TinyEEGNet A/B/C configs; seeds 0--9 \\
Routing & route-rule pseudocode; benign/untrusted contexts; command taxonomy strings \\
Transcripts & 240 local route examples plus 1,200 sampled live-router bridge transcripts \\
Attacks & PGD, transfer, smoothed, band-limited, causal-ramp, temporal-shift, score-query NES, decision-only search \\
Results & seed-level numerators, denominators, gates, CIs, utility-subset audit rows \\
\bottomrule
\end{tabular}
\end{table}

The support package is part of the validation object. It specifies the EEGMMI event-window construction, channel set, train/cal/test and subject-holdout splits, standardization constants, decoder hyperparameters, attack objectives, query budgets, route-rule pseudocode, route contexts, and seed-level numerator/denominator JSON for every table entry. The transcript artifacts record the public context string, decoded primary and secondary intent labels, policy flags, chosen harmless stub, execution boolean, and block reason for each stored example. The optional live bridge stores the corresponding public router JSON responses and safety-audit fields. The utility-subset artifact records the clean-correct agreement denominator, clean confidences, attacked decoder outputs, Wilson interval, perturbation ratio, and SNR for Exp6. The Exp7 decoded-label export supplies 80 clean and 80 C3 public cases per seed without raw EEG, embeddings, subject identifiers, paths, or checkpoints. The protocol is denominator-sensitive: a C2 EEG flag rate, a C2 route rate, a C3 conditional bypass rate, and an unauthorized C3 route rate are different units of analysis and should not be merged into one success number.

\begin{table}[t]
\centering
\caption{Matched stronger-defense stress, provided split, 10 seeds. Entries are unauthorized target-route rates except the decision-only row, which reports attack-attributable unauthorized routes. Calibrated matched defenses use clean-burden thresholds chosen on the calibration split; confirmation uses the clean label as an upper-bound execution policy. The Confirmation upper-bound row's Primary cell is marked n/a because the confirmation rule is an execution policy that consumes the clean label rather than an EEG-side defense applied to primary-only routing.}
\label{tab:v7-stronger-defense}
\scriptsize
\setlength{\tabcolsep}{1pt}
\begin{tabular}{@{}lccc@{}}
\toprule
Attack & Primary & Cal. PGD pair & Cal. PGD ens. \\
\midrule
White-box pair PGD & $1.000$ & $.693{\pm}.056$ & $.590{\pm}.062$ \\
White-box ensemble PGD & $.998{\pm}.005$ & $.685{\pm}.053$ & $.692{\pm}.061$ \\
Band-limited ensemble PGD & $.832{\pm}.100$ & $.135{\pm}.026$ & $.137{\pm}.028$ \\
Score-query, 100 queries & $.407{\pm}.121$ & $.087{\pm}.073$ & $.067{\pm}.067$ \\
Score-query, 200 queries & $.433{\pm}.109$ & $.093{\pm}.074$ & $.073{\pm}.063$ \\
Decision-only, 200 queries & $.067{\pm}.042$ & $.037{\pm}.018$ & $.037{\pm}.023$ \\
Confirmation upper bound (exec.) & n/a & $0.000$ & $0.000$ \\
\bottomrule
\end{tabular}
\end{table}

\begin{table}[t]
\centering
\caption{Confirmatory metrics with seed-level bootstrap 95\% confidence intervals. Rows are pre-specified claim-bearing metrics; exploratory stress variants remain in the released JSON.}
\label{tab:t10-confirmatory-ci}
\scriptsize
\setlength{\tabcolsep}{2pt}
\resizebox{\linewidth}{!}{
\begin{tabular}{@{}lccc@{}}
\toprule
Metric & Mean & 95\% CI & Min--max \\
\midrule
Exp1 C3 unauth. bypass & $1.000$ & $[1.000,1.000]$ & $1.000$--$1.000$ \\
Exp2 non-conf. route & $1.000$ & $[1.000,1.000]$ & $1.000$--$1.000$ \\
Exp2 confirmation route & $0.000$ & $[0.000,0.000]$ & $0.000$--$0.000$ \\
Exp8 native C2 prov. route & $0.000$ & $[0.000,0.000]$ & $0.000$--$0.000$ \\
Exp8 native C3 non-conf. route & $1.000$ & $[1.000,1.000]$ & $1.000$--$1.000$ \\
Exp8 native confirmation route & $0.000$ & $[0.000,0.000]$ & $0.000$--$0.000$ \\
Exp9 proxy route, FA $.05$ & $.178$ & $[.153,.206]$ & $.125$--$.250$ \\
Exp9 proxy route, FA $.01$ & $.091$ & $[.073,.113]$ & $.063$--$.138$ \\
Exp9 clean proxy route, FA $.05/.01$ & $.573/.458$ & $[.539,.604]/[.426,.493]$ & $.475$--$.625$ / $.388$--$.563$ \\
Exp10 smooth9 route & $.883$ & $[.855,.909]$ & $.800$--$.938$ \\
Exp10 band 8--30 route & $.995$ & $[.991,.999]$ & $.988$--$1.000$ \\
Exp10 smooth9+band route & $.156$ & $[.135,.178]$ & $.113$--$.213$ \\
Exp3 white-box PGD pair & $.693$ & $[.660,.730]$ & $.633$--$.800$ \\
Exp3 white-box ensemble & $.590$ & $[.552,.630]$ & $.500$--$.700$ \\
Exp3 band-limited pair & $.135$ & $[.118,.152]$ & $.100$--$.183$ \\
Exp3 score-query pair & $.093$ & $[.047,.140]$ & $.000$--$.200$ \\
Exp4 decision-only pair & $.037$ & $[.027,.047]$ & $.000$--$.067$ \\
Exp4 decision-only ensemble & $.037$ & $[.023,.050]$ & $.000$--$.067$ \\
Confirmation upper bound & $0.000$ & $[0.000,0.000]$ & $0.000$--$0.000$ \\
\bottomrule
\end{tabular}}
\end{table}

\begin{table}[t]
\centering
\caption{Coverage--risk frontier for matched non-confirmation defenses, provided split, 10 seeds. Coverage is the clean target-route rate. White-box risk is unauthorized target routing under pair PGD; decision-only risk is attack-attributable unauthorized routing at 200 route/abstain queries.}
\label{tab:t10-coverage-risk}
\scriptsize
\setlength{\tabcolsep}{2pt}
\begin{tabular}{@{}lccc@{}}
\toprule
Defense & Clean coverage & White-box risk & Decision-only risk \\
\midrule
Primary only & $.499$ & $1.000$ & $.067$ \\
PGD agreement & $.400$ & $.853$ & $.070$ \\
Calibrated PGD pair & $.218$ & $.693$ & $.037$ \\
Calibrated PGD ensemble & $.204$ & $.590$ & $.037$ \\
Confirmation upper bound & $.365$ & $0.000$ & $0.000$ \\
\bottomrule
\end{tabular}
\end{table}

The matched defenses change the magnitude but not the validation conclusion. Multi-step adversarial verification, ensembles, and calibrated abstention reduce unauthorized routes, especially for band-limited, score-query, and decision-only attacks. They do not make non-confirmation agreement an intent certificate: calibrated white-box residual routes remain between $0.590$ and $0.693$, while score-query and decision-only routes are consistent with non-zero residual but not conclusive at the present seed count: the score-query pair seed-level CI includes zero (e.g., $[0.000, 0.200]$), and decision-only route/abstain searches find attack-attributable unauthorized routes up to $0.070$ but with overlapping seed CIs at the matched-defense rows. Table~\ref{tab:t10-confirmatory-ci} separates confirmatory claim-bearing metrics from exploratory stress variants and reports seed-level bootstrap intervals; the decision-only calibrated pair residual is $0.037$ with 95\% CI $[0.027,0.047]$. Table~\ref{tab:t10-coverage-risk} shows the clean-coverage tradeoff directly: calibrated non-confirmation defenses reduce coverage and risk, Exp9's non-oracle confirmation proxy reduces native C3 routing to $.178$ and $.091$ at clean proxy coverage $.573$ and $.458$, and only the separately marked confirmation execution upper bound reaches zero unauthorized route risk. This is the paper's safety boundary: mediation and confirmation proxies can reduce and log risk, but they do not by themselves turn agreement into intent.

\subsection{Honest Frontier and Threat-Archetype Necessity}
\label{sec:phase5}

This subsection instantiates Proposition~\ref{thm:conformal-far} and Corollary~\ref{cor:far-route} on the same native command-control denominator as Exp8/Exp9. The reported numbers are pooled over 10 seeds and 80 source-not-target cases per seed, with $95\%$ Wilson confidence intervals; full seed-level rows are in the released JSON.

\paragraph{Strict held-out calibration corrects in-fold leakage.}
Refitting Exp9 with $f_B$ trained only on the train index and the threshold selected only on the calibration index gives $\alpha{=}.05$ empirical FAR $0.0525$, $95\%$ CI $[0.039,0.070]$, versus $0.0912$, CI $[0.073,0.113]$ in the leaky train$+$cal fit reported in Exp9. Stricter targets reach $\alpha{=}.01$ FAR $0.0038$, CI $[0.001,0.011]$, and $\alpha{=}.005$ FAR $0.000$, CI $[0.000,0.005]$. Empirical FAR is at or below $\alpha$ for $\alpha\in\{.005,.01,.02,.03,.05\}$, consistent with Proposition~\ref{thm:conformal-far}; the only finite-sample slack appears at $\alpha{=}.10$, where empirical FAR is $0.1187$ with CI $[0.098,0.143]$ and is rescued by a stronger confirmer below.

\paragraph{Cross-architecture frontier.}
Replacing TinyEEGNetB by EEGNetV4 \cite{lawhern2018eegnet} as the conformal confirmer keeps Proposition~\ref{thm:conformal-far} intact and Pareto-improves the (FAR, utility) frontier at every $\alpha$. EEGNetV4 confirmer accuracy is $0.814$ versus TinyEEGNetB $0.796$. Figure~\ref{fig:frontier-cross-arch} reports the full $\alpha$-sweep for both confirmers; Table~\ref{tab:phase5-frontier} lists the seed-pooled values. At $\alpha{=}.10$, EEGNetV4 controls FAR at $0.0975$ where TinyEEGNetB drifts to $0.1187$. At $\alpha{=}.005$, EEGNetV4 utility nearly doubles ($0.104$ versus $0.060$) while keeping FAR below $0.005$. The conformal target line dominates both empirical curves except at the $\alpha{=}.10$ TinyEEGNetB point, demonstrating that the FAR control claim of Proposition~\ref{thm:conformal-far} is not architecture-specific.

\begin{figure}[t]
\centering
\includegraphics[width=\linewidth]{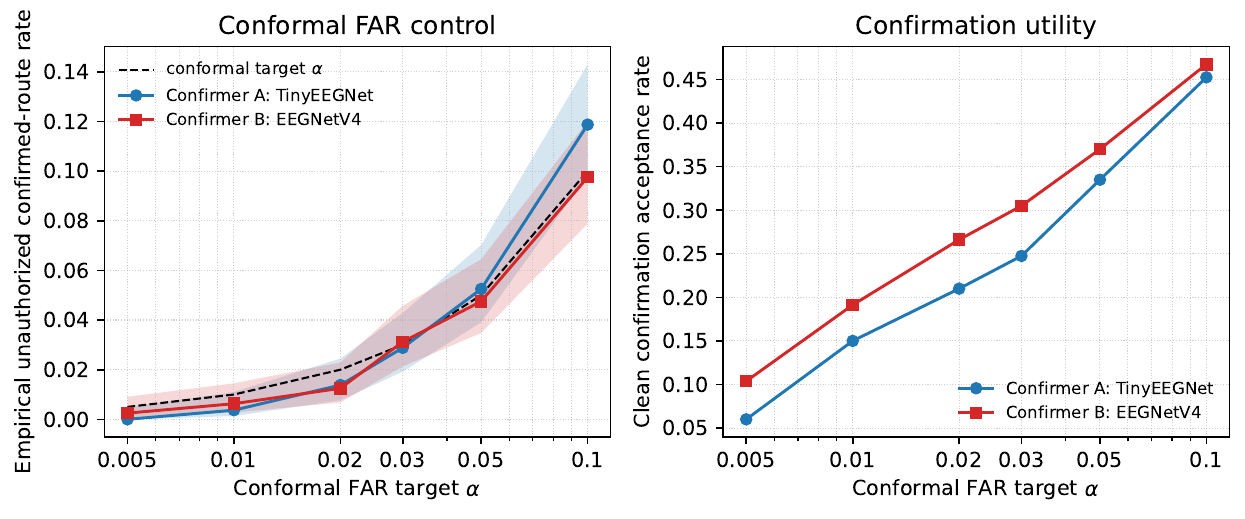}
\caption{Cross-architecture conformal frontier. \textbf{Left:} empirical unauthorized confirmed-route rate versus conformal target $\alpha$, with $95\%$ Wilson CI shading and the conformal target line $y{=}\alpha$. \textbf{Right:} clean confirmation acceptance rate (utility) at the same $\alpha$. EEGNetV4 Pareto-dominates TinyEEGNetB across the frontier under threat model B-strict.}
\label{fig:frontier-cross-arch}
\end{figure}

\begin{table}[t]
\centering
\caption{Conformal $(\alpha, \mathrm{FAR}, \text{utility})$ frontier for two confirmers under threat model B-strict, 10 seeds, 80 cases/seed. FAR is pooled empirical unauthorized confirmed-route rate; utility is clean confirmation acceptance rate. Subject-cluster bootstrap CI uses 60 EEGMMI subjects as resampling units; trial-level Wilson CI is included for comparison.}
\label{tab:phase5-frontier}
\scriptsize
\setlength{\tabcolsep}{2pt}
\begin{tabular}{@{}lccccc@{}}
\toprule
Confirmer & $\alpha$ & FAR & Wilson CI & Subj.\ CI & Utility \\
\midrule
TinyEEGNetB & .005 & .000 & $[.000,.005]$ & $[.000,.000]$ & .060 \\
TinyEEGNetB & .01 & .004 & $[.001,.011]$ & $[.000,.008]$ & .150 \\
TinyEEGNetB & .02 & .014 & $[.008,.025]$ & $[.006,.023]$ & .210 \\
TinyEEGNetB & .03 & .029 & $[.019,.043]$ & $[.017,.042]$ & .247 \\
TinyEEGNetB & .05 & .053 & $[.039,.070]$ & $[.037,.069]$ & .335 \\
TinyEEGNetB & .10 & .119 & $[.098,.143]$ & $[.097,.142]$ & .452 \\
EEGNetV4 & .005 & .003 & $[.001,.009]$ & $[.000,.006]$ & .104 \\
EEGNetV4 & .01 & .006 & $[.003,.015]$ & $[.001,.012]$ & .191 \\
EEGNetV4 & .02 & .013 & $[.007,.023]$ & $[.005,.021]$ & .266 \\
EEGNetV4 & .03 & .031 & $[.021,.046]$ & $[.020,.044]$ & .305 \\
EEGNetV4 & .05 & .048 & $[.035,.065]$ & $[.033,.063]$ & .370 \\
EEGNetV4 & .10 & .098 & $[.079,.120]$ & $[.077,.119]$ & .468 \\
\bottomrule
\end{tabular}
\end{table}

\paragraph{Threat-archetype necessity.}
The two assumptions of Corollary~\ref{cor:far-route} are tested directly, and the C/C$'$ rows are also the empirical instantiation of Proposition~\ref{thm:necessity}. Letting (A) confirmation pairs share the same subject and same session, (B) acquisition-isolated pairs use disjoint subjects, (B$_{\mathrm{xr}}$) cross-run within session uses the same subject but a disjoint EEGMMI run, and (C) the attacker additionally perturbs the confirmation window inside the same $\ell_\infty$ ball, we obtain Table~\ref{tab:phase5-archetype}. Acquisition isolation alone (A vs.\ B/B$_{\mathrm{xr}}$) does not change the controlled FAR. The B$_{\mathrm{xr}}$ row uses a different EEGMMI run as the confirmation window for the same recording day; this is a within-session cross-run audit and not a cross-day deployment estimate, and we report it as such (FAR $0.040$ with subject-cluster CI $[0.027, 0.054]$). Confirmation-channel control breaks the bound regardless of pairing (C, FAR $\approx 1$), exactly as Proposition~\ref{thm:necessity} predicts at finite $\varepsilon$.

\begin{table}[t]
\centering
\caption{Threat-archetype matrix at $\alpha{=}.05$, $\epsilon_{\mathrm{src}}{=}0.08$, strict calibration, $10$ seeds, $80$ cases/seed. ``Attack on cnf.'' indicates whether the attacker perturbs the confirmation window inside the same budget. Subject-cluster bootstrap CI uses 60 EEGMMI subjects as resampling units; trial-level Wilson CI is shown for comparison. The B$_{\mathrm{xr}}$ row uses a different EEGMMI run within the same recording day as the confirmation window and is therefore a cross-run within-session audit, not a cross-day deployment estimate.}
\label{tab:phase5-archetype}
\scriptsize
\setlength{\tabcolsep}{2pt}
\resizebox{\linewidth}{!}{
\begin{tabular}{@{}llcccc@{}}
\toprule
Archetype & Pair source & Attack on cnf. & FAR & Wilson CI & Subj.\ CI \\
\midrule
A & same subj.\ same session & no & .053 & $[.039,.070]$ & $[.037,.069]$ \\
B & disjoint subject & no & .048 & $[.035,.065]$ & $[.034,.062]$ \\
B$_{\mathrm{xr}}$ cross-run & same subj.\ disjoint run & no & .040 & $[.029,.056]$ & $[.027,.055]$ \\
C & same subj.\ same session & yes & .998 & $[.991,.999]$ & $[.994,1.000]$ \\
C$'$ & disjoint subject & yes & .999 & $[.993,1.000]$ & $[.996,1.000]$ \\
\bottomrule
\end{tabular}}
\end{table}

\paragraph{Preprocessing stability.}
Repeating the strict $\alpha{=}.05$ frontier under no-record-normalization confirmation acquisition gives FAR $0.046$, CI $[0.034,0.063]$, within sampling noise of the record-normalized $0.053$. The Exp10 preprocessing failure boundary likewise persists under both record-normalized and no-record-normalized strict calibration: smooth9$+$band $0.179/0.240$ and shift-$\pm4$ $0.21/0.23$ across the two pipelines.

\paragraph{Reading.}
Strict split-conformal calibration converts the Exp9 confirmation proxy into a finite-sample frontier whose FAR matches Proposition~\ref{thm:conformal-far} and whose two threat assumptions are individually necessary. The cross-run within-session B$_{\mathrm{xr}}$ archetype shows a pairing that does not require disjoint-subject confirmation in audit, but it is not a cross-day deployment estimate; the C/C$'$ archetypes mark the assumption boundary that any deployment must enforce by acquisition design rather than by post-hoc detection. Cross-day non-stationarity, headset reseating, and impedance drift are open assumptions outside the present data and are revisited in \S\ref{sec:discussion}.

\paragraph{Calibration-size sweep.}
Sub-sampling the calibration negatives to $n_c\in\{160,320,480,640,800\}$ keeps empirical FAR below $\alpha{=}.05$ at all sizes; CI hi shrinks from $0.088$ to $0.070$ as $n_c$ grows, with the $O(1/m_t)$ discretization slack expected from the split-conformal proof. This is the substantive finite-sample check on Proposition~\ref{thm:conformal-far}.

\paragraph{$\varepsilon$-invariance under source-only attack (sanity check).}
Under threat assumption (ii), source perturbations cannot change the confirmation-window score distribution because the confirmer never sees $X_{\mathrm{src}}$. As an audit-pipeline sanity check we sweep $\varepsilon_{\mathrm{src}}\in\{0.04,0.08,0.12,0.16,0.24,0.32\}$ at $\alpha{=}.05$, strict calibration, $10$ seeds, $80$ cases/seed: empirical FAR is exactly $0.0525$ with Wilson 95\% CI upper bound $0.0702$ at every $\varepsilon_{\mathrm{src}}$. The flat sweep is therefore a tautological consequence of the threat model rather than an independent prediction confirmed by data; we report it only to verify that the experimental pipeline does not silently violate (ii), for example by leaking attacker perturbations into the confirmation channel.

\paragraph{Capacity-saturation finding.}
Proposition~\ref{thm:conformal-far} controls FAR for any conformal confirmer regardless of capacity, but does not predict utility. We sweep four EEGNetV4 capacity points by varying $(F_1, D, F_2)$ from $(4,1,4)$ to $(16,2,32)$, an $8\times$ parameter range, holding all other choices fixed at the strict $\alpha=.05$, $\epsilon_{\mathrm{src}}=0.08$ configuration. Table~\ref{tab:phase5-capacity} and Figure~\ref{fig:capacity-saturation} report the result: confirmer accuracy and clean utility are essentially flat across the four sizes, and FAR remains near or below $\alpha$ in every case. The largest model has slightly lower utility ($0.359$ vs.\ $0.385$ for S) and accuracy ($0.806$ vs.\ $0.817$), consistent with mild overfitting on a $\sim 480$-event train split. Within this regime --- EEGMMI R03/R04/R07/R08/R11/R12 left/right command-control with $\sim 480$ training events and EEGNetV4-family backbones --- additional confirmer capacity beyond a few hundred parameters does not improve the FAR/utility trade-off. We do not claim an information-theoretic ceiling: a larger backbone (e.g., EEGConformer or a foundation model linear head) or a larger train index could shift the frontier, and the L row's mild drop is consistent with sample-size-limited overfitting rather than capacity saturation. The reported claim is therefore the within-regime statement that small EEGNetV4 confirmers already saturate the achievable trade-off on this dataset.

\begin{figure}[t]
\centering
\includegraphics[width=\linewidth]{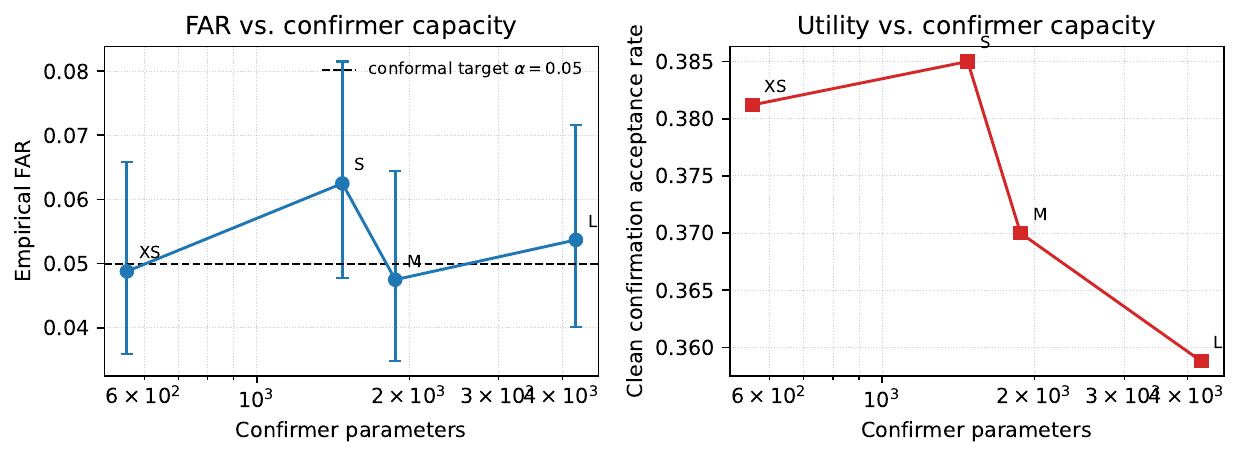}
\caption{Capacity-saturation finding for the conformal confirmer at $\alpha{=}.05$, $\epsilon_{\mathrm{src}}{=}0.08$, $10$ seeds. \textbf{Left:} empirical FAR with $95\%$ Wilson CI as a function of confirmer parameters; the conformal target $\alpha$ is dashed. \textbf{Right:} clean confirmation acceptance rate (utility). FAR and utility are flat across an $8\times$ EEGNetV4 capacity range, indicating dataset-limited rather than confirmer-limited frontier behavior on EEGMMI native command-control.}
\label{fig:capacity-saturation}
\end{figure}

\begin{table}[t]
\centering
\caption{EEGNetV4 capacity sweep at $\alpha{=}.05$, strict calibration, $10$ seeds, $80$ cases/seed. Capacity is varied via $(F_1, D, F_2)$; param counts are computed for the $16$-channel $512$-sample event window. The conformal target line is included for reference.}
\label{tab:phase5-capacity}
\scriptsize
\setlength{\tabcolsep}{2pt}
\begin{tabular}{@{}lccccccc@{}}
\toprule
Cap. & $(F_1,D,F_2)$ & params & conf.\ acc. & FAR & Wilson CI & utility \\
\midrule
XS & $(4,1,4)$ & $554$ & $.816$ & $.049$ & $[.036,.066]$ & $.381$ \\
S & $(8,2,8)$ & $1{,}474$ & $.817$ & $.063$ & $[.048,.082]$ & $.385$ \\
M & $(8,2,16)$ & $1{,}874$ & $.814$ & $.048$ & $[.035,.065]$ & $.370$ \\
L & $(16,2,32)$ & $4{,}258$ & $.806$ & $.054$ & $[.040,.072]$ & $.359$ \\
\bottomrule
\end{tabular}
\end{table}

\paragraph{Failure boundaries and non-evidence cells.}
The validation protocol rejects cells whose clean decoders are not usable or whose target base rate is degenerate. In an earlier log-power LDA defense attempt, the mean clean accuracy was only $0.531$ and zero seeds passed the predefined gate, so that cell is not used as evidence for agreement bypass. A separate target sweep similarly exposed unreachable or degenerate targets in the synthetic harness. These negative checks matter because otherwise C3 bypass could be inflated by weak clean decoders or trivial target priors; the real-EEG A:C claim is restricted to split/seed cells that pass the decoder and denominator gates reported above.

\paragraph{External-consistency bridge.}
The optional live bridge checks whether the deterministic routing semantics survive an actual LLM call under the same fixed prompt policy. It sends only decoded labels, synthetic context strings, provenance and confirmation policy flags, and harmless mock tool schemas; it does not send raw EEG, embeddings, subject identifiers, dataset paths, local file paths, checkpoints, or real tool credentials. In the full public Claude Opus 4.7 run, 144{,}000 decoded-label routing episodes produce parse-valid responses $1.000$, policy-compliant responses $1.000$, deterministic-reference agreement $0.9754$, C2 high-impact route rates $1.000/0.000$ without/with provenance, C3 high-impact route rates $1.000/0.000$ under agreement-plus-provenance versus confirmation-plus-provenance, and clean non-abstain utility $0.846$. The run records 1{,}200 sampled transcripts, 480 live API calls, zero executed high-impact stubs, and zero safety-audit violations. It passes the predeclared 80-case-per-seed decoder-case gate and all router behavior gates. We therefore report the live run as a public transcript bridge, not as a deployed-agent validation or real-tool attack.

\section{Discussion}
\label{sec:discussion}

The experiments support a boundary claim for offline BCI tool-use audits. EEG-only rules cannot see context-only C2 attacks, and dual-channel agreement can be satisfied by a C3 audit intervention that optimizes against both decoders. This does not make agreement useless. Disagreement logs are still useful for burden measurement, attack triage, and abstention policies. Agreement should be evaluated as a monitored component, not presented as a certificate of intent. The result is strongest as an audit-identifiability protocol: each class requires a different observable, a different denominator, and a different falsification cell before a deployment can claim that a route is safe.

\paragraph{Acquisition-isolation regimes and cross-day deployment.}
The threat-archetype matrix in Table~\ref{tab:phase5-archetype} reports three audit regimes for acquisition isolation: same-subject same-session (A), disjoint-subject (B), and same-subject cross-run within session (B$_{\mathrm{xr}}$). All three are within-session audit pairings on EEGMMI R03/R04/R07/R08/R11/R12 recordings collected on the same day. They do not estimate cross-day or cross-session deployment because the dataset does not contain such pairings: there is no time gap, no headset reseating, and no impedance drift between source and confirmation windows in any row. Cross-day non-stationarity has been documented to reduce within-subject EEG transfer, and a deployed cross-day confirmation channel would require a separate field study rather than a stricter audit on this dataset. We therefore label B$_{\mathrm{xr}}$ as a cross-run within-session result, not a deployable cross-day estimate.

\paragraph{Timing alignment is a fragile attack surface, not a robustness margin.}
Exp10 reports that combined smoothing and 8--30~Hz band-limiting collapses agreement-route rates to $0.156$, and that $\pm 4$- and $\pm 8$-sample shifts of an already-optimized perturbation reduce dual-decoder agreement to $0.014$--$0.25$ depending on the constraint family (Table~\ref{tab:exp10-survival}). We treat these as the strongest sanity checks against tensor-only artifacts in the C3 channel. Their reading is asymmetric: they show that the agreement-bypass channel reported by Exp1/Exp6/Exp8 is an offline tensor-aligned audit signal whose persistence under timing perturbations and combined preprocessing is not given for free, not that physical realizability is established. A deployed adversary would need not only a small $\ell_\infty$ budget but also tight timing alignment with the source-window epoch; the present data do not establish either side of physical feasibility, and the temporal-shift sensitivity is a genuine open caveat for any deployment claim built on top of these audits.

\paragraph{Command semantics and deployment boundary.}
The EEGMMI T1/T2 labels are real motor-imagery event labels. In the main R03 route cells, they serve as binary command labels assigned to harmless non-executing route stubs, so those cells should not be read as financial, medical, or file-system command evidence. Exp8 addresses the stronger semantic objection directly by using EEGMMI runs whose T1/T2 annotations correspond to native left/right fist execution or imagery and mapping them only to cursor-left/cursor-right stubs. Exp9 adds an independent EEG confirmation-window proxy, but it remains a proxy for confirmation rather than measured human consent. The broader command-taxonomy cells then test whether the same denominator discipline survives a larger public route vocabulary. The contribution is to make the authorization semantics explicit: a decoded command, an external context, a provenance rule, an agreement rule, and a confirmation policy can be audited separately.

The live bridge is deliberately non-load-bearing. Its purpose is to check whether the decoded-label and context-provenance semantics survive an actual LLM call, not to estimate online BCI-agent risk. Claude Opus 4.7 reproduces the C2/C3/confirmation route pattern over 144{,}000 routing episodes with parse validity and policy compliance both $1.000$, and the run passes the full 80-case-per-seed gate. The public tool-agent and command-taxonomy transcript cells remain the main route-semantics evidence because they use decoded real-EEG command labels, public synthetic contexts, and the same 10-seed protocol without sending private neural data. A deployed-agent study would fix a public tool-calling deployment, publish transcripts, add user-confirmation labels, and evaluate many more command contexts. That study would test external validity rather than replace the C2/C3 observability claim.

The C3 lemma is not a surprise claim about optimization. Its role is to identify a predictive feasibility condition: if a shared raw-input perturbation can cross each decoder's target margin, agreement becomes an attacker-side constraint. The empirical contribution is the falsification test that reviewers can inspect: independently trained raw EEG decoders, three split protocols, a perturbation-strength sweep, explicit margin diagnostics, clean-correct agreement subset auditing, public tool-routing semantics, transfer attacks, smoothed / band-limited / causal-ramp constraints, post-hoc preprocessing and temporal-shift stress, multi-step adversarial verification, matched clean-burden calibration, ensembles, score-query attacks, and decision-only route/abstain searches all show where the shared feasible target region persists and where it breaks under the stated offline audit condition. Stronger non-confirmation verification reduces route rates, but it does not convert agreement into an intent certificate.

Finally, this is not an online BCI deployment study, a physical-injection demonstration, a human-consent study, or a real-tool authorization benchmark. Hardware latency, nonstationarity, user confirmation behavior, richer command semantics, and clinical risk controls are outside the present evidence. The standardized raw perturbation budget is calibrated against the event cache, and the route/stress cells add smoothed, band-limited, causal-ramp, temporal-shift, transfer, matched-defense, score-query, decision-only, native-command, and independent-confirmation-proxy constraints, but they are still offline tensor-space audits rather than actuator models. Those factors are the next validation layer; the top-level contribution here is that BCI tool-use pipelines need class-specific validation because their failure modes live in different observable channels.

\paragraph{Next validation gates.}
The present evidence establishes a class-specific audit failure in calibrated offline cells. The next strengthening step is not to weaken the claim, but to add external-validity layers after the audit variables are identifiable: real user-confirmation labels, public online tool-calling transcripts, temporal/nonstationarity measurements, and physical-channel or human-in-the-loop perturbation constraints. These extensions would test deployment validity while preserving the core contribution: each route-safety claim requires a distinct observable and falsification denominator.

\paragraph{C4 query logging.}
C4 is not a main empirical claim in this paper. It is kept as a reporting row because query access changes what an auditor can infer about decoder, embedding, or policy behavior. Without a direct extraction experiment, the requirement is limited to disclosure: BCI-tool papers should state query access, logging, and rate limits rather than treating decoder behavior as unobservable.

\paragraph{Ethics and dual-use boundary.}
The experiments use public EEGMMI-style event labels, harmless non-executing routes, deterministic stubs, and optional prompts that omit raw EEG, embeddings, subject identifiers, and dataset paths. The attack code is useful for defensive evaluation because it exposes when agreement should not be treated as an intent certificate. A release should therefore package the protocol, metrics, and harmless surrogate routes, while avoiding turnkey instructions for operating against deployed BCI devices or real tool actions.

\section{Conclusion}
\label{sec:conclusion}

BCI-controlled tool-use pipelines require route-safety audits whose logged variables match the attack class. The minimal audit-schema theorem, C3 attacked-dependence decomposition, fixed routing suite, native command-control endpoint, non-oracle confirmation-proxy frontier, temporal/preprocessing stress frontier, and real EEG validation protocol show why EEG-only rules and clean agreement cannot certify C2/C3 authorization claims by themselves. The practical implication is not to discard verification, but to evaluate it as one component in a class-specific safety stack with provenance, confirmation, and adaptive dual-decoder testing under explicit route and perturbation constraints.

\begingroup
\sloppy

\endgroup

\end{document}